\newclass{\PTIME}{PTIME}
\definecolor{purple}{rgb}{0.65, 0.27, 0.95}
\definecolor{rose}{rgb}{0.8, 0, 0.4}
\definecolor{vert}{rgb}{0, 0.6, 0.4}
\definecolor{ashgrey}{rgb}{0.7, 0.75, 0.71}
\definecolor{iceberg}{rgb}{0.44, 0.65, 0.82}
\definecolor{pastelorange}{rgb}{1.0, 0.7, 0.28}
\definecolor{lightkhaki}{rgb}{0.94, 0.9, 0.55}
\newcolumntype{C}[1]{>{\centering\let\newline\\\arraybackslash\hspace{0pt}}m{#1}}
\newsavebox\MBox
\newcommand{\B}{\mathbb{B}}
\newcommand{\set}[1]{\ensuremath{\left\{#1\right\}}}
\newcommand{\ie}{\emph{i.e.}\@\xspace}
\newcommand{\inmts}{\text{IN\_MTS}}
\newcommand{\under}{\phi_u}
\newcommand{\modelu}{\mu}
\newcommand{\raff}{\phi_{r}}
\newcommand{\contr}{\phi_{ce}}
\newcommand*{\descbox}[1]{%
   \mathop{\boxed{#1}}\limits
}
\newcolumntype{U}{>{\centering\arraybackslash}p{3em}}
\newcolumntype{T}{>{\centering\arraybackslash}p{2em}}
\renewcommand\orcidID[1]{}
\begin{document}
\title{%
Tackling Universal Properties of Minimal Trap Spaces of Boolean Networks
}

%
%
\author{%
Sara Riva\inst{1}\orcidID{0000-0003-2133-8089}\and
Jean-Marie Lagniez\inst{2}\orcidID{0000-0002-6557-4115}\and
Gustavo Magaña López\inst{1}\orcidID{0000-0001-8140-0468}\and
Lo{\"\i}c Paulev{\'e}\inst{1}\orcidID{0000-0002-7219-2027}
}
\authorrunning{S. Riva et al.}
%
\institute{%
Univ. Bordeaux, CNRS, Bordeaux INP, LaBRI, UMR 5800, F-33400 Talence, France \\
\email{\{sara.riva,gustavo.magana,loic.pauleve\}@labri.fr}
\and
Univ. Artois, CNRS, CRIL, F-62300 Lens, France \\
\email{lagniez@cril.fr}
}

\maketitle              
\begin{abstract}
Minimal trap spaces (MTSs) capture subspaces in which the Boolean dynamics is trapped, whatever the update
mode. They correspond to the attractors of the most permissive mode.
Due to their versatility, the computation of MTSs has recently gained traction, essentially by
focusing on their enumeration.
In this paper, we address the logical reasoning on universal properties of MTSs in the scope of two
problems: the reprogramming of Boolean networks for identifying the permanent freeze of Boolean variables
that enforce a given property on all the MTSs,
and the synthesis of Boolean networks from universal properties on their MTSs.
Both problems reduce to solving the satisfiability of quantified propositional logic formula
with 3 levels of quantifiers ($\exists\forall\exists$).
In this paper, we introduce a Counter-Example Guided Refinement Abstraction (CEGAR) to efficiently
solve these problems by coupling the resolution of two simpler formulas.
We provide a prototype relying on Answer-Set Programming for each formula and show its tractability
on a wide range of Boolean models of biological networks.

\keywords{Boolean networks \and Attractors \and Synthesis \and QBF \and CEGAR}
\end{abstract}

\section{Introduction}

Since recent years, we observe a surge of successful applications of Boolean networks (BNs) in biology and
medicine for the modeling and prediction of cellular dynamics in the case of cancer and cellular
reprogramming~\cite{zanudo21,Reda2022,montagud22}.
Such applications face two main challenges: being able to design a qualitative Boolean model which
is faithful to the behavior of the biological system and being able to compute predictions to control its
(long-term) dynamics.
From a computational point of view, the latter problem mostly depends on the complexity of the dynamical
property to enforce, while the former additionally suffers from the combinatorics of candidate
models.

In a BN, the state of interacting components are modeled with Boolean values $\B =\{0,1\}$.
Then, given $n$ components, the dynamics evolve within the finite discrete space of
\emph{configurations} $\B^n$.
For each component $i$, the BN specifies a Boolean function $f_i:\B^n\to \B$ to compute the value
towards which the component state evolves from a given configuration of the network.
The transitions between the configurations are then computed according to an \emph{update mode}.
For instance, with the synchronous mode, a configuration $x\in\B^n$ has a (unique) transition
to configuration $(f_1(x),\cdots,f_n(x))$,
whereas, with the fully asynchronous mode, it has one transition for each component $i$ such that
$f_i(x)\neq x_i$ and going to $(x_1,\cdots,x_{i-1},f_i(x),x_{i+1},\cdots,x_n)$.
There is a vast zoo of update modes defined in the literature. They reflect different modeling
hypotheses on how the components evolve with respect to each other, and can have a great impact on
the resulting dynamics~\cite{pauleve2022boolean}.
These update modes can be compared using a simulation relation:
an update mode simulates another if, for any transition $x\to y$ of the latter, there
exists a trajectory from $x$ to $y$ with the former mode.
This results in a hierarchy of update modes~\cite{pauleve2022boolean},
where the \emph{most permissive}~\cite{naturepauleve,automata21} simulates all Boolean update modes.
The most permissive mode captures any trajectory of any quantitative model
which is a refinement of the BN (intuitively, a refinement adds quantitative information on
interaction thresholds and state, while respecting the logic of state change).
Hence, most permissive Boolean dynamics have formal connections with quantitative systems, contrary to
(a)synchronous modes, which are known to preclude the prediction of actually feasible trajectories
in biological systems~\cite{naturepauleve}.
 
Most applications of BNs to biological systems involve two types of dynamical properties:
the trajectories between configurations, which model changes of the state of components over time,
and the attractors, which capture the long-term dynamics of the system.
An attractor can be characterized by a set of configurations from which there is no out-going
transition, and such that there is a trajectory between any distinct pair of its configurations.
When it is composed of a single configuration, the attractor is called a fixed point of the
dynamics.

Capturing properties that are shared by \emph{all} the attractors (or \emph{all} attractors
reachable from a given set of configurations) is, therefore, a fundamental task of BN modeling.
In this paper, we focus on two problems related to these universal properties:
the reprogramming of a given BN with the permanent freeze of components of the network,
and the synthesis of a BN which matches with a given architecture while showing the desired universal
property on its attractors.
These are relevant problems since attractors usually capture biological phenotypes \cite{cifuentes2022control,klarner2018basins} and the reprogramming of a biological phenomena is linked to the search of treatments \cite{montagud2022patient}. However, the computational complexity of these problems is stirred by the complexity of characterizing (all)
the attractors of a BN.
This complexity depends on the update mode.
For (a)synchronous update modes, determining whether a configuration belongs to an attractor is an
infamous PSPACE-complete problem, which largely impedes the tractability of analysis of networks
with several hundreds of components.
Indeed, attractors can have very different shapes with these modes.

The \emph{minimal trap spaces} of BNs are related to their attractors but do not depend on the update mode.
A \emph{trap space} is a subcube of $\B^n$ which is closed by the local functions
(the image by $f_1, \ldots, f_n$ of a vertex is one of its vertices).
It is minimal whenever there is no other trap space within it.
The fixed points are particular cases of minimal trap spaces.
In some sense, a trap space delimits a portion of the space from where any trajectory with any update
mode is trapped within.
Thus, a minimal trap space encloses at least one attractor, with any update mode.
However, an (a)synchronous (non-fixed point) attractor is not necessarily included in a minimal trap space.
Nevertheless, in practice for biological models, minimal trap spaces have been observed to be good approximations of
asynchronous attractors~\cite{klarner2015approximating}.
Moreover, it turns out that minimal trap spaces are exactly the attractors of the most permissive
mode~\cite{naturepauleve}.
Back to our computational point of view, (minimal) trap spaces are more amenable
objects thanks to a much lower complexity~\cite{moon2022computational}.
Recent approaches demonstrated the tractability of methods based on solving the satisfiability of
logical formulas for enumerating minimal trap spaces in BNs with several thousands of
components~\cite{naturepauleve,trinh2022minimal}.
In large networks, however, their exhaustive enumeration can be intractable.

In this paper, we study the problem of reprogramming and synthesis of BNs from properties on
    their minimal trap spaces. In the case of the most permissive update mode, this gives exact
    methods to reason on the attractors. In the case of other update modes, this gives approximate
methods to address their attractors.
Specifically, we will consider \emph{marker} properties of trap spaces:
a marker is a partial map associating a subset of components with a Boolean value, e.g. $\{ a\mapsto
1, c\mapsto 0\}$ where $a$, and $c$ are components of the BN.
A trap space matches with a marker if all its configurations match with it (e.g., $a$ is always $1$ and $c$
always $0$ in the trap space).
The \emph{marker reprogramming}~\cite{pauleve2022marker} consists in permanently freezing a subset
of its components to specific Boolean values so that all the minimal trap spaces of the resulting BN match with the
marker.
The \emph{synthesis} consists in deriving a BN that matches with a given network architecture
(influence graph) and such that all its minimal trap spaces match with a given marker.
These problems can be expressed as a logical formula of the form
``there is a permanent freeze $P$ (resp. a BN matching with influence graph) such that all the
minimal trap spaces of the BN perturbed by $P$ (resp. of the BN) match with the given marker $M$''.
As we will explain, both problems boil down to solving the satisfiability of
quantified propositional Boolean formulas (QBF)~\cite{Bning2021TheoryOQ} with three levels of quantifiers ($\exists\forall\exists$, 3-QBF).

While modern SAT~\cite{booksat} and Answer-Set Programming (ASP)~\cite{baral2003knowledge,gebser2012answer} solvers can address efficiently
1-QBF (NP) and 2-QBF problems respectively, the generic solving of higher order QBFs problems turns out
to be very challenging.
In \cite{pauleve2022marker}, the marker reprogramming of minimal trap spaces is tackled by solving a complementary
problem which is only 2-QBF.
However, as we will show in experiments, this approach turns out to be intractable for large
networks and for the synthesis problem.
To the best of our knowledge, this is so far the only other method addressing universal properties
over minimal trap spaces in BNs.

Instead of solving directly the 3-QBF problems, we introduce in Sect.~\ref{sec:cegar} a logic
approach based on a Counter-Example-Guided Abstraction Refinement (CEGAR) of a simpler formula.
Essentially, we extract candidate perturbations (resp. BNs) from an NP formula and verify
using a 2-QBF formula whether they fulfill the universal property.
If not, we extract a counter-example that we generalize and plug in the
original NP formula. The procedure is repeated until either we prove that the 3-QBF problem is
not satisfiable, or a candidate perturbation (resp. BN) verifies the universal property on the minimal trap spaces.
We developed a prototype based on ASP and show in Sect.~\ref{sec:experiments} its tractability for
the reprogramming and synthesis of large BNs.

\section{Preliminaries}
\subsection{Boolean Networks and Trap Spaces}

A BN of dimension $n$ is a function $f:\B^n \rightarrow \B^n$ where $\B=\{0,1\}$. 
The vectors $x\in \B^n$ are its \emph{configurations}, where, for each $i\in\{1,\ldots, n\}$, $x_i$
denotes the \emph{state} of \emph{component} $i$.
For each component $i$, $f_i:\B^n \rightarrow \B$ is called its \emph{local function}.
It can be specified using  truth tables, Binary Decision Diagrams (BDDs)
\cite{drechsler2013binary}, or propositional formulas, to name but a few.
Each $f_i$ typically depends on a subset of components of the BN.
The \emph{influence graph} $G(f)$ captures these dependencies.
It is the signed digraph $(\{1,\ldots,n\},E_+\cup E_-)$ such that
there is a positive (resp. negative) influence of $i$ on $j$, \ie, $(i,j)\in E_+$ (resp. $E_-$) if and only if there
exists at least one configuration $x$ such that
$f_j(x_1,\ldots,x_{i-1},1,x_{i+1},\ldots, x_n) - f_j(x_1,\ldots,x_{i-1},0,x_{i+1},\ldots, x_n) = 1$
(resp. $-1$).
Remark that different BNs can have the same influence graph.
A BN $f$ is \emph{locally monotone} whenever $E_+\cap E_-=\emptyset$: a component $i$ cannot influence
positively and negatively a component $j$.

A vector $h\in \set{0,1,\ast}^n$ denotes a subcube of $\B^n$ where dimensions with value $\ast$ are
\emph{free}, and others are \emph{fixed}.
Its vertices are the $2^k$ configurations $c(h)=\set{x\in\B^n \mid h_i \neq \ast \Rightarrow
x_i=h_i}$ where $k$ is the number of free dimensions.
A subcube $h$ is a \emph{trap space} if it is closed by $f$, \ie, $\forall x\in c(h)$, $f(x)\in c(h)$.
Note that $(\ast)^n$ is always a trap space.
Given two subcubes $h,h'$, $h$ is \emph{smaller than} $h'$, noted $h \preceq h$', if and only if
$c(h) \subseteq c(h')$.
A trap space is \emph{minimal} if it does not contain a smaller trap space.
We denote by $TS_f(x)\in \set{0,1,\ast}^n$ the smallest trap space of $f$ containing the
configuration $x$.
$TS_f(x)$ always exists and is unique: if two subcubes $h,h'$ are trap spaces, their intersection is a trap space.

\begin{example}\label{ex:back}
Consider the BN $f:\B^4\to\B^4$ with
$f_1(x)=x_2$, $f_2(x)=x_1$, $f_3(x)=\neg x_4 \land (x_1 \lor x_2)$, and $f_4(x)=\neg x_3$.\\
\centerline{\includegraphics[width=0.35\textwidth]{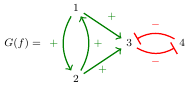}}
It is locally monotone and $h=11\ast\ast$ is a trap space since
$$\{f(1100),f(1101),f(1110),f(1111)\}\subseteq c(11\ast\ast) = \{1100,1101,1110,1111\}$$
but $h$ is not minimal since it contains the (minimal) trap space $1101$.
\end{example}

\subsection{The Marker Reprogramming Problem}\label{sec:rep}

We assume the dimension $n$ of BNs fixed.
We denote by $\mathbb M$ the set of all partial maps from $\{1,\ldots,n\}$ to $\B$.
Given $k\in\mathbb N$, we write $\mathbb M^{\leq k}$ the partial maps with at most $k$ associations.
In the following, \emph{markers} and \emph{perturbations} are defined as partial maps.
Notice that a partial map is equivalent to a subcube where the fixed dimensions are specified
by the mapping.
We say that a configuration $x\in\B^n$ \emph{matches} with a marker $M$, denoted by $x\models M$,
whenever for each $i\mapsto b \in M$, $x_i=b$.
Similarly, given a subcube $h\in\{0,1,\ast\}^n$, $h\models M$ if and only if for each $i\mapsto b\in
M$, $h_i=b$. Equivalently, $h\models M \Leftrightarrow \forall x\in c(h), x\models M$.

Given a BN $f$ and a perturbation $P\in \mathbb M$, the \emph{perturbed BN} $f/P$ is obtained by
replacing the corresponding local functions with constant values: for each component
$i\in\set{1,\ldots,n}$,
$$
(f/P)_i(x)=\begin{cases}b & \text{if $i\mapsto b\in P$,}\\ 
f_i(x) & \text{otherwise.}
\end{cases}
$$
Intuitively, a perturbation permanently freezes involved components.
It is important to remark that the minimal trap spaces of $f/P$ and $f$ can be very different.

\smallskip
Given a BN $f$ and a marker $M\in\mathbb M$, the \emph{marker reprogramming} problem consists in
identifying perturbations $P\in\mathbb M$ such that \emph{all} the minimal trap space of $f/P$
match with $M$.
Typically, we aim the complete identification of (subset)minimal perturbations only, \ie, the perturbations
so that no submap is a solution.
Moreover, it is usual to limit the size of the perturbations to some constant $k$ as many
simultaneous perturbations can be difficult to implement experimentally.
Similarly, some components may be \emph{uncontrollable}, and the perturbations must not involve
them.
With either of these cases, the problem can be non-satisfiable (otherwise $P=M$ is a trivial
solution), and deciding its satisfiability is already challenging.
Finally, notice that if $P=\emptyset$ is a solution, then all the minimal trap spaces of $f$ match with $M$.

Marker reprogramming generalizes the fixed point reprogramming considered in
\cite{biane2018causal,moon2022bilevel}, limited to ensuring that all the fixed points only
match with the marker.
In \cite{pauleve2022marker}, the problem over minimal trap spaces has been tackled for the first
time in the scope of locally monotone BNs.
The proposed approach relies on a modeling of the problem in QBF.
Let us first consider the predicate $\inmts_{f/P}(x)$ which is true if and only if $x$ belongs to a
minimal trap space of the BN $f/P$. The marker reprogramming problem can then be expressed as
follows:
$$
\exists P \in \mathbb{M}^{\leq k}, \forall x \in\B^n,  \inmts_{f/P}(x) \Rightarrow x \models M
$$
In the equation, $\inmts_{f/P}(x)\Rightarrow x \models M$ can be reformulated as $x\models M \vee
\neg\inmts_{f/P}(x)$. Then, one can remark that $\inmts_{f/P}(x)$ is false if and only if $TS_{f/P}(x)$ is not
minimal, \ie, there exists a configuration $y\in TS_{f/P}(x)$ such that $TS_{f/P}(y)\subsetneq TS_{f/P}(x)$.
In the case of locally monotone BNs, $TS_{f/P}(x)$ and $TS_{f/P}(y)$ can be computed in polynomial
time~\cite{naturepauleve}.
Thus, the marker reprogramming boils down to the following 3-QBF:
\begin{equation}\label{p3}
    \exists P \in \mathbb{M}^{\leq k}, \forall x \in\B^n : x \not\models M,  \exists y \in
    TS_{f/P}(x), TS_{f/P}(y) \neq TS_{f/P}(x).
\end{equation}

The approach of \cite{pauleve2022marker} relies on Answer-Set Programming (which is limited to 2-QBF
problems) to solve the complementary problem of identifying all perturbations $P$ such that at least one minimal trap
space of $f/P$ does not match with the marker
($\exists P \in \mathbb{M}^{\leq k}, \forall x \in\B^n, x\not\models M\wedge \forall y\in TS_{f/P}(x), TS_{f/P}(y)=TS_{f/P}(x)$).
Then, the solutions are obtained by an ensemble difference with $\mathbb M^{\leq k}$.
While $\mathbb M^{\leq k}$ is of polynomial size with $n$, this complementary problem becomes rapidly
intractable with large networks having numerous wrong perturbations.

\begin{example}\label{ex:MTS}
Consider the BN $f$ with $f_1(x)=\neg x_2$, $f_2(x)=\neg x_1$, $f_3(x)=x_1 \land \neg x_2 \land \neg x_4$, $f_4(x)= x_3 \lor x_5$ and $f_5(x)=\neg x_3\land x_5$.
It has two minimal trap spaces ($010\ast\ast$ and $10\ast\ast\ast$).
If we consider the marker $M=\set{2\to1,3\to1}$ and $k=2$, a possible solution is
$P=\set{3\to1,1\to0}$:
the BN $f/P$ has just a single minimal trap space, $01110$, which matches with $M$.
\end{example}

\subsection{The Synthesis Problem}

The automatic design of BNs from specifications on their static and dynamical properties is another
prime challenge for applications in biology
\cite{dorier2016boolean,yordanov2016method,Chevalier2020-CMSB}.

There has been recent progress to address the synthesis from asynchronous dynamical
properties, including attractors~\cite{Goldfeder2019,Benes2020}, but they still show limited tractability.
Synthesis of BNs from specifications on their most permissive dynamics has been shown to have 
great scalability, thanks to a lower computational complexity, with applications to networks up to several
thousands of components~\cite{chevalier2019synthesis}.
Nevertheless, prior work did not account for universal properties over minimal trap spaces.

Static properties of the network allow delimiting the domain of candidate BNs, which we denote by $\mathbb F$.
It usually comes from a given signed influence graph $\mathcal G=(\set{1,\ldots,n},\mathcal
E_+\cup\mathcal E_-)$. In that case, $\mathbb F$ could be, for instance, any BN $f$ such that its influence graph
$G(f)=(\set{1,\ldots,n},E_+\cup E_-)$ is equal to $\mathcal G$, or included in $\mathcal G$
(\ie, $E_+\subseteq \mathcal E_+$ and $E_-\subseteq \mathcal E_-$).
Additionally, $\mathbb F$ could be restricted with already specified partial Boolean local
functions.

\smallskip
Given a domain of BNs $\mathbb F$ and a marker $M\in\mathbb M$, the \emph{synthesis problem} consists
in identifying a BN $f\in \mathbb F$ such that \emph{all} the minimal trap spaces of $f$ match with
$M$.
It can be expressed as 3-QBF in a very similar fashion to the marker
reprogramming problem~\eqref{p3}:
\begin{equation}\label{synt}
\exists f\in\mathbb F, \forall x \in\B^n : x \not\models M,  \exists y \in TS_f(x), TS_f(y) \neq TS_f(x).
\end{equation}
Note this problem can be unsatisfiable.
The main difference with marker reprogramming is the combinatorics of the domain $\mathbb F$.
To our knowledge, there was no approach to tackle this synthesis problem efficiently.

\begin{example}\label{ex:syn}
Consider $M=\set{3 \to 1}$ and $\mathcal G$ equal to the $G(f)$ presented in Example \ref{ex:back}. 
A solution to the synthesis problem is $f_1(x)=1$, $f_2(x)=x_1$, $f_3(x)= x_2 \lor \neg x_4$ and $f_4(x)=\neg x_3$.
In this scenario, the influence graph is a sub-graph of $\mathcal G$ and the only minimal trap space (which also respects $M$) is $1110$.
\end{example}

\subsection{Counter-Example-Guided Abstraction Refinement (CEGAR)}\label{sec:intro-cegar}

CEGAR~\cite{clarke2003counterexample} is an incremental way to decide the satisfiability of a (possibly quantified) logic formula
$\phi$ by the mean of a simpler formula $\under$ (resp. $\phi_o$) so that the models of $\under$
subsume (resp. $\phi_o$ are subsumed by) the models of $\phi$.
Thus, $\under \Leftarrow \phi$ (resp. $\phi_o \Rightarrow \phi$).
The choice between $\under$ and $\phi_o$ depends on whether one wants to \emph{under-} or \emph{over-approximate} $\phi$.
%

We briefly explain the principle with the $\under$ case.
If $\under$ is unsatisfiable, so is $\phi$.
Otherwise,
a model $\vv{\modelu}$ of $\under$ is found ($\vv{\modelu}\models \under$), and one must verify whether $\vv{\modelu}\models \phi$.
If $\vv{\modelu}\not\models\phi$, $\vv{\modelu}$ is a \emph{counter-example} and $\under$ must be refined
with some $\raff(\vv{\modelu})$ so that $\under\land\raff(\vv{\modelu})\Leftarrow \phi$.
The process is repeated until the refined $\under$ is either unsatisfiable,
demonstrating the unsatisfiability of $\phi$,
or a model of the refined $\under$ is a model of $\phi$.
The challenge is thus to design a refinement which makes this process converge rapidly, which is
problem-specific.


\section{A CEGAR for Minimal Trap Spaces}\label{sec:cegar}

In this section, we introduce a CEGAR-based approach for addressing universal marker properties over
minimal trap spaces.
The refinement can be directly employed in solving the marker reprogramming~\eqref{p3} and
synthesis~\eqref{synt} problems.
In the case of locally monotone BNs, or BNs whose local functions are specified using propagation
complete representations (such as BDDs or Petri net), this boils down to iteratively solving on the
one hand an NP problem ($\under$) and on the other
hand a 2-QBF problem for identifying counter-examples.

We first detail the CEGAR for the case of marker reprogramming before discussing its generalization
to the synthesis problem.

\subsection{Generalizing counter-examples for refinement}\label{sec:ref}

Recall that the marker reprogramming problem consists in identifying perturbations of size at most
$k$ under which all the minimal trap spaces of the given BN $f$ match with the given marker $M$.
Let us consider the 3-QBF~\eqref{p3}, that we refer to as $\phi$ in this
section, following the notations of Sect.~\ref{sec:intro-cegar}.
Let us assume that we are given a candidate perturbation $\vv P\in\mathbb M^{\leq k}$, for instance, from a
model of the NP formula $\under=\exists P\in\mathbb M^{\leq k}$.
To be a solution for $\phi$, one must verify that all the minimal
trap spaces of the perturbed BN $f/\vv P$ match with the marker.
Thus, $\vv{P}$ and $\vv{x}$ are used to represent a specific perturbation and a specific configuration.
A \emph{counter-example} would be a configuration  of a minimal trap space of $f/\vv P$
that does not match with the marker.
Such counter-examples are models of the following QBF:
\begin{equation}\label{eq:counter}
    \contr(\vv P) = \exists x \in\B^n \text{ s.t. } x \not\models M \text{ and } \forall y \in TS_{f/\vv P}(x), TS_{f/\vv P}(x)=TS_{f/\vv P}(y).
\end{equation}
If $\contr(\vv P)$ is not satisfiable, then $\vv P$ is a model of $\phi$ and thus a solution to the
marker reprogramming problem.
Otherwise, let us denote by $\vv{x}$ a model of $\contr(\vv P)$, \ie, a counter-example showing that $\vv P$
is not a model of $\phi$:
the configuration $\vv{x}$ belongs to a minimal trap space of $f/\vv P$ and does not match with the marker.
Intuitively, $\vv{x}$ is a configuration that shows why $\vv P$ is not valid with respect to a marker $M$.
The idea to move forward in the search for a valid perturbation is to avoid any other perturbation
$P\in\mathbb{M}^{\leq k}$ such that $\vv x$ belongs to one of its minimal trap spaces.
Let us now point out some useful properties concerning trap spaces and markers:
\begin{property}\label{prop:general}
For any BN $f:\B^n \to \B^n$, marker $M\in\mathbb{M}^{\leq n}$, perturbation $P\in\mathbb{M}^{\leq k}$
and configuration $x\in\B^n$:
\begin{enumerate}
    \item if $TS_{f/P}(x)\models M$, all minimal trap spaces within $TS_{f/P}(x)$ match with $M$;\label{prop:mts}
    \item if $x \not\models M$, it holds that $TS_{f/P}(x) \not\models M$;\label{prop:marker}
\item if $x \not\models M$, any perturbation $P'$ such that $x$ is in a minimal trap space of
    $f/P'$ is not a model of $\phi$;\label{prop:counter}
\item if $x$ is not in a minimal trap space of $f/P$, there exists a configuration $y\in TS_{f/P}(x)$ such that $TS_{f/P}(y)\varsubsetneq TS_{f/P}(x)$.\label{prop:min}
\end{enumerate}
\end{property}
As illustrated by Fig.~\ref{fig:counterexample}, these properties imply two constraints that must be
verified by any perturbation $P$ model of $\phi$:
\begin{figure}[tb]
    \centering
    \includegraphics[height=30mm]{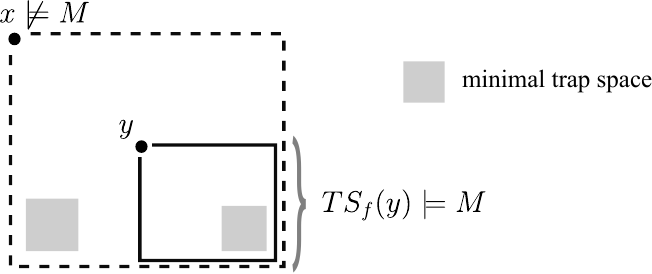}
    \caption{\label{fig:counterexample}
        Illustration of the refinement given an $x\in\B^n$ such that $x\not\models M$.
        If all the minimal trap spaces (gray squares) of $f$ match with $M$, there must
        exist $y\in TS_f(x)$ (dashed line) so that all the configurations of
        $TS_f(y)$ (plain line) match with $M$.}
\end{figure}
(a) the trap space $TS_{f/P}(\vv{x})$ is not minimal, \ie, there must exist a configuration $y\in TS_{f/P}(\vv x)$
whose trap space is strictly smaller ($TS_{f/P}(y)\subsetneq TS_{f/P}(\vv x)$);
and (b) $TS_{f/P}(\vv x)$ must contain a trap space which matches with the marker $M$.
Combining (a) and (b), there must exists at least one configuration $y\in TS_{f/P}({\vv x})$ such that $TS_{f/P}(y)\models M$.
Therefore, we define the following refinement from the counter-example $\vv x$:
\begin{equation}\label{eq:p3-raf}
\raff(\vv{x}) = \exists y \in \B^n: TS_{f/P}(y) \varsubsetneq TS_{f/P}(\vv{x}) \land TS_{f/P}(y)\models M\enspace.
\end{equation}

Remarking that $\vv P\not\models \under\land \raff(\vv{x})$, one can then apply the CEGAR approach
by iterating the refinement until either no counter-example can be found (and thus $\vv P$ is a
solution), or until the refined formula becomes non-satisfiable.
The correctness is expressed by the following lemma:

\begin{lemma}\label{lem:cegar}
Given $\vv{P} \models \under$ and $\vv{x}\models \contr(\vv{P})$, it holds that $\under \land \raff(\vv{x}) \Leftarrow \phi$.
\end{lemma}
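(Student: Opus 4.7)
The plan is to show that any perturbation $P$ that solves the marker reprogramming problem (i.e., $P\models\phi$) must also satisfy the refined abstraction $\under\land\raff(\vv{x})$, where $\vv{x}$ is the counter-example arising from $\vv{P}$. Since $\phi$ imposes $P\in\mathbb{M}^{\leq k}$, membership in $\under$ is immediate; the work lies in exhibiting a witness $y$ for $\raff(\vv{x})$.

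First, I would unpack what $\vv{x}\models\contr(\vv{P})$ provides: a configuration $\vv{x}\not\models M$ that lies in a minimal trap space of $f/\vv{P}$. Crucially, though, the refinement \eqref{eq:p3-raf} only uses $\vv{x}$ as a parameter — the trap space $TS_{f/P}(\vv{x})$ is computed with respect to the candidate $P$ under test, not with respect to $\vv{P}$. So I would fix an arbitrary $P\models\phi$ and analyze $TS_{f/P}(\vv{x})$ with that $P$.

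The key chain of reasoning then runs through Property~\ref{prop:general}. Since $\vv{x}\not\models M$, Property~\ref{prop:general}.\ref{prop:marker} gives $TS_{f/P}(\vv{x})\not\models M$; combined with Property~\ref{prop:general}.\ref{prop:mts} (contrapositive form), this forces $TS_{f/P}(\vv{x})$ to be non-minimal in $f/P$. Now pick any minimal trap space $T$ contained in $TS_{f/P}(\vv{x})$ (one exists because trap spaces are finite and closed under intersection). Since $P\models\phi$, every minimal trap space of $f/P$ satisfies $M$, so $T\models M$. Choose $y\in c(T)$; by minimality of $T$ and the definition of $TS_{f/P}$, we get $TS_{f/P}(y)=T$. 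This yields $TS_{f/P}(y)\models M$ and $TS_{f/P}(y)\subseteq TS_{f/P}(\vv{x})$.

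The one subtlety — and arguably the main (though mild) obstacle — is ensuring the inclusion is \emph{strict}, as required by \eqref{eq:p3-raf}. This is where the marker mismatch does the work: $TS_{f/P}(y)=T\models M$ while $TS_{f/P}(\vv{x})\not\models M$, so the two trap spaces cannot coincide, hence $TS_{f/P}(y)\varsubsetneq TS_{f/P}(\vv{x})$. This $y$ therefore witnesses $\raff(\vv{x})$, completing the proof that every model of $\phi$ satisfies $\under\land\raff(\vv{x})$, i.e., $\under\land\raff(\vv{x})\Leftarrow\phi$.
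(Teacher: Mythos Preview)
Your argument is correct and mirrors the paper's proof, recast as a direct verification (every $P\models\phi$ admits a witness $y$ for $\raff(\vv{x})$) rather than the paper's elimination phrasing (only invalid $P$'s are cut away by $\raff(\vv{x})$). One small correction: the non-minimality of $TS_{f/P}(\vv{x})$ does not follow from the contrapositive of Property~\ref{prop:general}.\ref{prop:mts} but directly from $P\models\phi$ (every minimal trap space of $f/P$ matches $M$) together with $TS_{f/P}(\vv{x})\not\models M$ --- equivalently, this is the contrapositive of Property~\ref{prop:general}.\ref{prop:counter}.
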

\begin{proof}
Let us show that, starting from an under-approximation $\under$ and adding $\raff$, we get a new under-approximation of $\phi$.
In other words, we want to show that the set of solutions is shrunk by removing only invalid solutions.
Initially, $\under = \exists P\in \mathbb{M}^{\leq k}$. Then, $\under \Leftarrow \phi$. 
Considering a model $\vv{P}$ of $\under$ that is not a model of $\phi$,
one can find a model $\vv{x}$ of $\contr(\vv{P})$.
The identification of $\vv{x}$ is interesting since all $P$ in $\mathbb{M}^{\leq k}$ such that $\inmts_{f/P}(\vv{x})$ are not valid solutions (point \ref{prop:counter} of Property \ref{prop:general}).
By initially defining $\raff(\vv{x}) = \exists y \in \B^n: TS_{f/P}(y) \varsubsetneq
TS_{f/P}(\vv{x})$ we remove all candidates $P$ for which $\vv{x}$ would be in a minimal trap space (point~\ref{prop:min} of Property \ref{prop:general}). 
The solution space is reduced by removing only invalid perturbations.
Adding $TS_{f/P}(y) \models M$, we impose that at least one minimal trap space within $TS_{f/P}(\vv{x})$
matches the marker (point~\ref{prop:mts} of Property \ref{prop:general}).
This constraint must indeed be satisfied in the solutions of $\phi$.
It allows to eliminate all $P$ for which $\nexists y \in \B^n: TS_{f/P}(y) \varsubsetneq TS_{f/P}(\vv{x}) \land TS_{f/P}(y) \models M$.
In other words, all $P$ where we fail to ensure that all minimal trap spaces in $TS_{f/P}(y)$ match the marker. Recall that the configuration $y$ is needed to ensure $\vv{x}$ outside minimal trap spaces. 
Again only invalid solutions are removed and thus a new under-approximation is obtained.
In a generic step, we have $\under = \exists P\in \mathbb{M}^{\leq k} \land \raff(\vv{x}^{(1)})
\land \ldots \land \raff(\vv{x}^{(q)})$. The solution space is further reduced with
$\raff(\vv{x}^{(q+1)})$, requiring that $\vv{x}^{(q+1)}$ is not within a minimal trap space and
that $TS_{f/P}(\vv x^{(q+1)})$ contains a trap space matching with the marker.
\end{proof}

The actual complexity of derived QBF formulas largely depends on the encoding of the $TS$ predicate,
and is discussed in Sect.~\ref{sec:complexity}.

\subsection{Necessary condition on perturbations}\label{sec:firstqbf}

In the previous section, the initial perturbation candidate is a model of $\phi_u=\exists
P\in\mathbb M^{\leq k}$.
This can be already refined by remarking that there always exists at least one minimal trap space in any BN.
Thus, one can already impose that at least one minimal trap space of $f/P$ matches with the marker.
By Prop.~\ref{prop:mts}, this is ensured by the existence of a configuration $w$ such that
$TS_{f/P}(w)\models M$.
This leads to the following formula:
\begin{equation}\label{eq:under}
\under = \exists P \in \mathbb{M}^{\leq k}, \exists w \in\B^n, TS_{f/P}(w)\models M\enspace.
\end{equation}

Remark that with this version of $\under$, only the first iteration of the CEGAR can be affected as
the existence of $w$ is subsumed by the refinement $\raff$.
Therefore, Lemma~\ref{lem:cegar} still holds with this $\under$.
Algorithm~\ref{alg:cegarR} summarizes the overall CEGAR procedure for solving the marker
reprogramming problem.

\floatname{algorithm}{Algorithm}
\renewcommand{\algorithmicrequire}{\textbf{Input:}}
\renewcommand{\algorithmicensure}{\textbf{Output:}}
\begin{algorithm}[tb]
\caption{CEGAR-based Reprogramming}
\begin{algorithmic}[1]\label{alg:cegarR}
\REQUIRE $f:\B^n \to \B^n, M \in \mathbb{M}^{\leq n}, k \in \set{0,\ldots,n}$
\ENSURE $P\in\mathbb{M}^{\leq k} \text{ such that } \forall x\in\B^n: \inmts_{f/P}(x) \Rightarrow x\models M $
\STATE $\under\;=\exists P \in \mathbb{M}^{\leq k}, \exists w \in\B^n : TS_{f/P}(w)\models M$
\STATE $\vv{P}\;=$ solve$(\under)$
\WHILE {$\vv{P}$ exists}
	\STATE {$\contr(\vv{P}) \;= \exists x \in\B^n : x \not\models M \land \inmts_{\vv{P}}(x)$}
	\STATE {$\vv{x}\;=$ solve($\contr(\vv{P})$)}
	\IF{$\vv{x}$ exists}
		\STATE $\raff(\vv{x})\;= \exists y\in\B^n : TS_{f/P}(y)\varsubsetneq TS_{f/P}(\vv{x}) \land TS_{f/P}(y) \models M$
		\STATE $\under\;= \under \land \raff(\vv{x})$
		\STATE $\vv{P}\;=$ solve$(\under)$
	\ELSE 
		\RETURN $\vv{P}$
	\ENDIF
\ENDWHILE
\RETURN UNSAT
\end{algorithmic}
\end{algorithm}
\begin{example}
Consider the BN of Example \ref{ex:back}.
According to $\under$ \eqref{eq:under}, a first candidate solution is $\vv{P}=\emptyset$.
To verify if all minimal trap spaces of $f/\vv{P}$ (\ie, $f$) match with $M=\set{1\to 1, 3\to 1}$, we verify the existence of a counter-example (with $\contr(\vv{P})$) and identify its model $\vv{x}=1101$.
In fact, $1101$ is a fixed point of $f/\vv{P}$.
Then, the new under-approximation is $\under \land \raff(1101)$ and we can identify a new candidate perturbation $\vv{P}=\set{4\to 0}$.
In this case, a counter-example is $\vv{x}=0000$.
Improving the approximation with $\under\land\raff(1101)\land\raff(0000)$, we can identify now $\vv{P}=\set{2\to 1,4\to 0}$ and no counter-example can be found.
Then, $P=\set{2\to1,4\to 0}$ is a solution of the reprogramming problem.
\end{example}

\subsection{Generalization to the synthesis problem}\label{sec:syn}

Recall that the synthesis problem consists in identifying a BN $f$ in a given domain $\mathbb F$ so
that all its minimal trap spaces match with the given marker $M$.
This problem can be seen as a generalization of the marker reprogramming problem by considering
$\mathbb F = \{ f/P \mid P\in\mathbb M^{\leq k}\}$.

The CEGAR developed in the previous section can be straightforwardly applied to the synthesis problem:
\begin{align}
    \under' &= \exists f \in \mathbb{F}, \exists w \in\B^n, TS_f(w)\models M \\
    \contr'(\vv f) &= \exists x \in\B^n \text{ s.t. } x \not\models M \text{ and } \forall y \in
                        TS_{\vv{f}}(x), TS_{\vv{f}}(x)=TS_{\vv{f}}(y)\\
    \raff'(\vv{x}) &= \exists y \in \B^n: TS_f(y) \varsubsetneq TS_f(\vv{x}) \land TS_f(y)\models M\enspace.
\end{align}
BN candidates are models of $\under'$ where, as for the reprogramming case, we already enforce the existence of a
minimal trap space matching with the marker.
If $\vv f$ is a model of $\under'$, $\contr'(\vv f)$ characterizes the counter-examples
configurations, and $\raff'(\vv x)$ provides the refinement from such a given counter-example $\vv
x$.

\begin{example}
Consider a complete graph $\mathcal{G}$ such that all edges are positive.
It is known that any BN $f$ having $G(f)=\mathcal G$ has only two fixed points ($0^n$ and $1^n$) and
no cyclic asynchronous attractor~\cite{aracena2008maximum,aracena2004positive}.
Given any $M$, the synthesis problem results in expression \eqref{synt}.
Let us start by considering the under-approximation 
$$\under' = \exists f \in\mathbb{F}, \exists w \in\B^n : TS_f(w) \models M.$$
Let us assume $M=\set{1\to0}$ and $n=3$.
Any $f$, such that $G(f)=\mathcal{G}$, is a valid candidate solution $\vv{f}$ since $x=000$ is a fixed point.
However, searching for a counter-example $\vv{x}$, we will find $\vv{x}=111$ since it is a fixed point and $\vv{x}\not\models M$.
With the refinement $\raff(\vv{x})$, the problem turns out to be 
\begin{equation}
\resizebox{.9\hsize}{!}{$\exists f \in\mathbb{F}, \exists w \in\B^n : TS_f(w) \models M \land (\exists y\in\B^n: TS_f(y)\varsubsetneq TS_f(111)\land TS_f(y) \models M)$}
\end{equation}
which is unsatisfiable.
Thus, the CEGAR approach allows us to verify the property without needing to solve the original problem \eqref{synt}.  
Remark that the known theoretical property is not implemented in the approach.
\end{example}

\subsection{Complexity}\label{sec:complexity}

It is important to notice that the formulas introduced in this section involve a predicate $TS_f(x)$ which
returns the smallest trap space containing the given configuration $x$ in the scope of BN $f$.
The encoding of this predicate can affect the complexity of the formulas, by adding variables,
but more importantly by potentially adding quantifiers.

As shown in~\cite{naturepauleve}, $TS_f(x)$ can be computed by progressively saturating a cube $h$,
starting from $h = x$:
for each component $i$ which is fixed in $h$, one check whether there exists a vertex of $h$, $y\in
c(h)$, such that $f_i(y)\neq h_i$. In that case, the $i$-th component is freed in the cube.
This iteration can be performed up to $n$ times.
Importantly, remark that the test $\exists y\in c(h): f_i(y)\neq h_i$ boils down to the SAT and UNSAT
problems of $f_i$.
However, it is known that the SAT/UNSAT decision can be deterministically computed in polynomial
time whenever $f_i$ is monotone (\ie, the BN $f$ is locally monotone), and, more generally,
whenever $f_i$ is given as a \emph{propagation complete} representation \cite{propagationref}, which includes BDDs
and Petri nets.
Therefore, in these cases, $TS_f(x)$ can be represented efficiently as a propositional formula (see
Appendix~\ref{sec:ts-cnf})
or using ASP, similarly to~\cite{chevalier2019synthesis}.
In other cases, the encoding $TS_f(x)$ involves the SAT problem.

We can then conclude that $\raff$ and the under-approximation $\under$ are NP ($\exists$) expressions,
while the check for counter-example $\contr$ is 3-QBF in the general case and 2-QBF with locally
monotone BNs or with propagation complete local functions (which are widely used by BN analysis
tools).

\section{Implementation and Performance Evaluation} \label{sec:experiments}

We implemented the CEGAR resolution of marker reprogramming and synthesis problems in a
prototype\footnote{
Code and dataset available at \url{https://github.com/bnediction/cegar-bonesis}}
relying on the Python library \textsc{BoNesis}\footnote{\url{https://github.com/bnediction/bonesis}} and
Answer-Set Programming multi-shot solver
\textsc{clingo}~\cite{Gebser2018}.
The prototype exploits a DNF representation for locally monotone local functions and a BDD
representation for non-monotone local functions (see Section \ref{sec:complexity}).

\subsection{Datasets}
We considered 2 sets of BNs and markers.

The \emph{Moon dataset}~\cite{moon2022bilevel}
consists of 10 locally monotone BNs and 1 non-monotone BN taken from biological
modeling literature.
Network sizes range from 13 to 75, classified as small (S1-S4), medium (M1-M3), and large (L1-L4).
Importantly, each BN comes with a marker and uncontrollable components from the related biological
application.

The \emph{Trappist dataset}~\cite{trinh2022minimal}
consists of 27 locally monotone BNs and 6 non-mono\-tones BNs
ranging from 47 to 4691 components taken from biological modeling literature, either designed
directly as BNs, or resulting from a conversion.
Contrary to the Moon dataset, no biologically-relevant markers are specified.
Therefore, for each network, we randomly generated two sets of markers
from vertexes of the bottom strongly connected components of the influence graphs (the ``output''
components), and associating a Boolean value to them.
For networks with at most 5 output components, all combinations of markers were considered.
Otherwise, we generated 100 markers, where each associates 3 random output components to a random Boolean value.
Then, for each network, a second set of markers of size 1 were generated with the same process.
Duplicates markers were removed.

\subsection{Protocol}

Besides assessing the scalability of our CEGAR implementation, we aimed at benchmarking it with
different variants of $\raff$ and $\under$, and in the case of reprogramming, with
already existing approaches.
Moreover, we aimed at evaluating generic QBF solvers, such as \textsc{CAQE}~\cite{rabe2015caqe} or
\textsc{DepQBF}~\cite{lonsing2017depqbf}, for tackling our 3-QBF problems, for which we devised a
standard quantified conjunctive normal form encoding (Appendix~\ref{app:qbf}).
However, it turned out that they failed to scale for the vast majority of our instances.

\paragraph{Marker reprogramming problem.}
The inputs were the BNs and their associated markers, together with the maximum size of
perturbations (parameter $k$).
Only subset-minimal perturbations were considered, using the adequate solving mode of
\textsc{clingo}.
Moreover, we denied perturbations involving components of the markers, and those
declared as uncontrollable in the Moon dataset.
In the case the instance is satisfiable, we analyzed both the time for computing the first solution, and
the time to exhaust the full set of subset-minimal solutions.
For the reprogramming scenario, we compared several methods:
\begin{itemize}
    \item Enumeration and filtering:
        the enumeration is performed by increasing size in order to obtain only subset-minimal
        solutions. The filtering is performed using $\contr$.
        This approach somehow corresponds to the most basic CEGAR implementation without any
        counter-example generalization 
        ($\raff$ is $P \not = \vv P$).
    \item Complementary: the method of~\cite{pauleve2022marker} based on enumerating
        perturbations that fail the reprogramming and subtract them from $\mathbb M^{\leq k}$
        (Sect.~\ref{sec:rep}).
    \item CEGAR-2: our implementation of Algorithm~\ref{alg:cegarR}.
    \item CEGAR-1:
        is the approach presented in Algorithm~\ref{alg:cegarR} with the difference that a weaker refinement, imposing only that 
        counter-examples are not part of minimal trap spaces, is used, \ie,
        $\raff(\vv{x})\;= \exists y\in\B^n : TS_{f/P}(y)\varsubsetneq TS_{f/P}(\vv{x})\enspace$ 
        (here it is not imposed that the trap space matches the marker).
\end{itemize}

The aim was therefore to investigate the scalability of the approach and the importance of the
various constraints added in CEGAR-1 and CEGAR-2, and compare them to the current state of the art.

\paragraph{Synthesis problem.}
The inputs were the influence graph of the BNs and their associated markers.
The domain $\mathbb F$ was defined as the set of locally monotone BNs whose local functions are
represented in disjunctive normal form (DNF).
With these settings, the problem can be unsatisfiable as local functions of components cannot be assigned to
constant functions, unless already a constant function in the original BN. Thus, the trivial
solution where marker components are assigned to their corresponding value is not part of the search domain
$\mathbb F$.
For the Moon dataset, we imposed that their influence graph match exactly with the input one; while
for the Trappist dataset, we relaxed this constrained by imposing only that the DNF of each local
function involves all the regulators of the corresponding component with the adequate sign, and
we limited the size of DNFs to 32 clauses.
Here, we reported the time for deciding the existence of solution.

Contrary to the reprogramming, no other tools enable addressing this problem directly.
In addition to the CEGAR-2 and CEGAR-1, we also implemented CEGAR-0 which follows the same algorithm
but use no counter-example generalization, \ie,
$\raff(\vv{x})$ is $f \not = \vv f$.
This somehow corresponds to the enumeration and filtering used for marker reprogramming.
The aim is therefore to understand whether a CEGAR-based approach can be an effective first approach to attack the synthesis problem in accordance with universal marker properties.

\medskip
Instances of Moon dataset were run on a desktop computer with 
Intel(R) Xeon(R) E-2124 CPU at 3.30GHz and 64GB of RAM
; instances of Trappist were run on cluster nodes with AMD(R) Zen2
EPYC 7452 CPU at 2.35GHz and 256GB of RAM, all on a Linux operating system.

\subsection{Results}

\paragraph{Moon instances.}
Table~\ref{exp:moon} gives a comparison of execution times for the implementations
\begin{table}[p]
\caption{\label{exp:moon}%
Execution times (in sec.) for the reprogramming on the Moon dataset where
   $n$ is the number of components and $u$ the number of uncontrollable ones.
   Column ``First'' indicates the time for identifying the first solution, or the unsatisfiability of the
   instance (0 solutions); ``Enum'' the time for exhaustive solution enumeration; ``Solution'' the
   higher number of solution identified.
   When the problem turns out to have no solution, the time required is indicated in the column ``First''.
   For CEGAR methods, the number of identified counter-examples is in parentheses. 
    Then, TO(-) means that no counter-examples were found in 10 minutes.}
\resizebox{\textwidth}{!}{    \begin{tabular}{l|c|U|U|T|T|T|T|T|T|c|}
\cline{2-11}
                                              & k     & \multicolumn{2}{|c|}{\begin{tabular}[c]{@{}c@{}}Enum \& Filter \end{tabular}} & \multicolumn{2}{|c|}{\begin{tabular}[c]{@{}c@{}}Complementary\end{tabular}} &  \multicolumn{2}{|c|}{CEGAR-1} & \multicolumn{2}{|c|}{CEGAR-2} & Solutions                                                                           \\ \cline{3-10}
                                              
                                              \cline{3-10}
                                              &      & \multicolumn{1}{|c}{First} & \multicolumn{1}{|c|}{Enum} &   \multicolumn{1}{|c|}{First} & \multicolumn{1}{c|}{Enum} &  \multicolumn{1}{|c}{First} & \multicolumn{1}{|c|}{Enum} &  \multicolumn{1}{|c}{First} & \multicolumn{1}{|c|}{Enum}&                                                                            \\ \cline{1-11}
  
\multicolumn{1}{|l|}{{\begin{tabular}[c]{@{}l@{}}S1, Sahin et al. (2009)\\ n=20, u=1\end{tabular}}}            & 
\begin{tabular}[c]{@{}l@{}}2\\ 4\\ 6\end{tabular}&
\multicolumn{1}{|c}{\begin{tabular}[c]{@{}c@{}}0.03 \\ 0.03 \\ 0.04 \end{tabular}}  & 
\multicolumn{1}{|c|}{\begin{tabular}[c]{@{}c@{}}0.9 \\ 10.9 \\ TO\end{tabular}}  & 
\multicolumn{1}{|c|}{\begin{tabular}[c]{@{}c@{}}0.03 \\ 0.03 \\ \textbf{0.03} \end{tabular}}  & 
\multicolumn{1}{c|}{\begin{tabular}[c]{@{}c@{}}\textbf{0.1} \\ 1 \\ 1 \end{tabular}}  & 
\multicolumn{1}{|c}{\begin{tabular}[c]{@{}c@{}}0.07(4) \\ 0.8(21) \\ 1(22) \end{tabular}}  & 
\multicolumn{1}{|c|}{\begin{tabular}[c]{@{}c@{}}3.4(37) \\ 197.3(181) \\ TO(265) \end{tabular}}  &  
\multicolumn{1}{|c}{\begin{tabular}[c]{@{}c@{}}\textbf{0.02}(1) \\ \textbf{0.02}(1) \\ \textbf{0.03}(1)\end{tabular}}  & 
\multicolumn{1}{|c|}{\begin{tabular}[c]{@{}c@{}}\textbf{0.1}(2)\\ \textbf{0.1}(2) \\ \textbf{0.1}(2) \end{tabular}}  &
\begin{tabular}[c]{@{}c@{}} 9\\ 12\\ 12\end{tabular}    \\ \hline

\multicolumn{1}{|l|}{{\begin{tabular}[c]{@{}l@{}}S2, Wynn et al. (2012) \\ n=17, u=2\end{tabular}}}              &
\begin{tabular}[c]{@{}l@{}}2\\ 4\\ 6\end{tabular}&
\multicolumn{1}{|c}{\begin{tabular}[c]{@{}c@{}}0.6 \\ 0.5 \\ 0.7 \end{tabular}}  & 
\multicolumn{1}{|c|}{\begin{tabular}[c]{@{}c@{}}2.5 \\ 150.7 \\ TO\end{tabular}}  & 
\multicolumn{1}{|c|}{\begin{tabular}[c]{@{}c@{}}0.1 \\ \textbf{0.1} \\ \textbf{0.1} \end{tabular}}  & 
\multicolumn{1}{c|}{\begin{tabular}[c]{@{}c@{}}\textbf{0.1} \\ 4.4 \\ 20.7 \end{tabular}}  & 
\multicolumn{1}{|c}{\begin{tabular}[c]{@{}c@{}}0.6(21) \\ 0.6(22) \\ 0.5(20) \end{tabular}}  & 
\multicolumn{1}{|c|}{\begin{tabular}[c]{@{}c@{}}1.9(32) \\ TO(275) \\ TO(305) \end{tabular}}  &  
\multicolumn{1}{|c}{\begin{tabular}[c]{@{}c@{}}\textbf{0.04}(3) \\ \textbf{0.1}(3) \\ \textbf{0.1}(3)\end{tabular}}  & 
\multicolumn{1}{|c|}{\begin{tabular}[c]{@{}c@{}}\textbf{0.1}(4)\\ \textbf{0.2}(5) \\ \textbf{0.3}(4) \end{tabular}}  &
\begin{tabular}[c]{@{}c@{}}9\\ 19\\ 31\end{tabular}        \\ \hline

\multicolumn{1}{|l|}{{\begin{tabular}[c]{@{}l@{}}S3, Kasemeier-Kulesa\\  et al. (2018)\\n=18, u=4\end{tabular}}}  & 
\begin{tabular}[c]{@{}l@{}}2\\ 4\\ 6\end{tabular}&
\multicolumn{1}{|c}{\begin{tabular}[c]{@{}c@{}}2.3\\ 27.4 \\ 49.7 \end{tabular}}  & 
\multicolumn{1}{|c|}{\begin{tabular}[c]{@{}c@{}}- \\ 129.1 \\ TO\end{tabular}}  & 
\multicolumn{1}{|c|}{\begin{tabular}[c]{@{}c@{}}0.1\\ 3.6 \\ 3.6 \end{tabular}}  & 
\multicolumn{1}{c|}{\begin{tabular}[c]{@{}c@{}} -\\ 3.6 \\ 55.4 \end{tabular}}  & 
\multicolumn{1}{|c}{\begin{tabular}[c]{@{}c@{}}1.1(26) \\ 104.6(158) \\ TO(-)\end{tabular}}  & 
\multicolumn{1}{|c|}{\begin{tabular}[c]{@{}c@{}}- \\ TO(266) \\ TO(300) \end{tabular}}  &  
\multicolumn{1}{|c}{\begin{tabular}[c]{@{}c@{}}\textbf{0.03}(2) \\ \textbf{0.1}(3) \\ \textbf{0.1}(3)\end{tabular}}  & 
\multicolumn{1}{|c|}{\begin{tabular}[c]{@{}c@{}}-\\ \textbf{0.1}(3) \\ \textbf{0.2}(3) \end{tabular}}  &
\begin{tabular}[c]{@{}c@{}} 0\\ 12\\ 18\end{tabular}       \\ \hline

\multicolumn{1}{|l|}{{\begin{tabular}[c]{@{}l@{}}S4, Biane et al. (2019)\\n=13, u=2\end{tabular}}}    & 
\begin{tabular}[c]{@{}l@{}}2\\ 4\\ 6\end{tabular}&
\multicolumn{1}{|c}{\begin{tabular}[c]{@{}c@{}}\textbf{0.01} \\ 0.04 \\ 0.03 \end{tabular}}  & 
\multicolumn{1}{|c|}{\begin{tabular}[c]{@{}c@{}}0.6 \\ 0.5 \\ 0.5\end{tabular}}  & 
\multicolumn{1}{|c|}{\begin{tabular}[c]{@{}c@{}}0.02 \\ \textbf{0.02} \\ \textbf{0.02} \end{tabular}}  & 
\multicolumn{1}{c|}{\begin{tabular}[c]{@{}c@{}}\textbf{0.1} \\ \textbf{0.1} \\ \textbf{0.1} \end{tabular}}  & 
\multicolumn{1}{|c}{\begin{tabular}[c]{@{}c@{}}0.03(2) \\ 0.03(2) \\ 0.03(2) \end{tabular}}  & 
\multicolumn{1}{|c|}{\begin{tabular}[c]{@{}c@{}}0.2(10) \\ 0.3(13)\\ 0.3(13)\end{tabular}}  &  
\multicolumn{1}{|c}{\begin{tabular}[c]{@{}c@{}}0.02(1) \\ \textbf{0.02}(1) \\ \textbf{0.02}(1)\end{tabular}}  & 
\multicolumn{1}{|c|}{\begin{tabular}[c]{@{}c@{}}\textbf{0.1}(1)\\ \textbf{0.1}(1) \\ \textbf{0.1}(1) \end{tabular}}  &
\begin{tabular}[c]{@{}c@{}}9\\ 9\\ 9\end{tabular}   \\ \hline

\multicolumn{1}{|l|}{{\begin{tabular}[c]{@{}l@{}}M1, Calzone et al. (2010)\\n=28, u=3\end{tabular}}}          & 
\begin{tabular}[c]{@{}l@{}}2\\ 4\\ 6\end{tabular}&
\multicolumn{1}{|c}{\begin{tabular}[c]{@{}c@{}}0.3 \\ 0.3 \\ 0.2 \end{tabular}}  & 
\multicolumn{1}{|c|}{\begin{tabular}[c]{@{}c@{}}8.4 \\ TO \\ TO\end{tabular}}  & 
\multicolumn{1}{|c|}{\begin{tabular}[c]{@{}c@{}}\textbf{0.04} \\ \textbf{0.04} \\ \textbf{0.04} \end{tabular}}  & 
\multicolumn{1}{c|}{\begin{tabular}[c]{@{}c@{}}\textbf{0.3} \\ 49.3 \\ TO \end{tabular}}  & 
\multicolumn{1}{|c}{\begin{tabular}[c]{@{}c@{}}3.1(34) \\ 19.5(73)\\ 37.4(95) \end{tabular}}  & 
\multicolumn{1}{|c|}{\begin{tabular}[c]{@{}c@{}}32.5(71)\\ TO(221) \\ TO(245) \end{tabular}}  &  
\multicolumn{1}{|c}{\begin{tabular}[c]{@{}c@{}}0.1(3) \\ 0.1(3) \\ 0.1(3) \end{tabular}}  & 
\multicolumn{1}{|c|}{\begin{tabular}[c]{@{}c@{}}0.5(4)\\ \textbf{4}(9) \\ \textbf{7.9}(9) \end{tabular}}  &
\begin{tabular}[c]{@{}c@{}}36\\ 213\\ 370\end{tabular}    \\ \hline

\multicolumn{1}{|l|}{{\begin{tabular}[c]{@{}l@{}}M2 - Cohen et al. (2015)\\n=32, u=6\end{tabular}}}             & 
\begin{tabular}[c]{@{}l@{}}2\\ 4\\ 6\end{tabular}&
\multicolumn{1}{|c}{\begin{tabular}[c]{@{}c@{}}14.6 \\ 258 \\ 246.6 \end{tabular}}  & 
\multicolumn{1}{|c|}{\begin{tabular}[c]{@{}c@{}}- \\ TO \\ TO\end{tabular}}  & 
\multicolumn{1}{|c|}{\begin{tabular}[c]{@{}c@{}}0.6 \\ 6.7 \\ 6.7 \end{tabular}}  & 
\multicolumn{1}{c|}{\begin{tabular}[c]{@{}c@{}} -\\ 78.1 \\ TO \end{tabular}}  & 
\multicolumn{1}{|c}{\begin{tabular}[c]{@{}c@{}}\textbf{0.03}(1) \\ 1(14)\\ \textbf{0.02}(0) \end{tabular}}  & 
\multicolumn{1}{|c|}{\begin{tabular}[c]{@{}c@{}}- \\ 43.2(61) \\ TO(182) \end{tabular}}  &  
\multicolumn{1}{|c}{\begin{tabular}[c]{@{}c@{}}\textbf{0.03}(1) \\ \textbf{0.1}(2) \\ \textbf{0.02}(0)\end{tabular}}  & 
\multicolumn{1}{|c|}{\begin{tabular}[c]{@{}c@{}}-\\ \textbf{0.6}(7) \\ \textbf{2.3}(8) \end{tabular}}  &
\begin{tabular}[c]{@{}c@{}} 0\\ 14\\ 78\end{tabular}  \\ \hline

\multicolumn{1}{|l|}{{\begin{tabular}[c]{@{}l@{}}M3 - Remy et al. (2015)\\n=35, u=4\end{tabular}}}             & 
\begin{tabular}[c]{@{}l@{}}2\\ 4\\ 6\end{tabular}&
\multicolumn{1}{|c}{\begin{tabular}[c]{@{}c@{}}4.6 \\ 4 \\ 6.4 \end{tabular}}  & 
\multicolumn{1}{|c|}{\begin{tabular}[c]{@{}c@{}}21.1 \\ TO \\ TO\end{tabular}}  & 
\multicolumn{1}{|c|}{\begin{tabular}[c]{@{}c@{}}0.8 \\ 0.9 \\ 0.9 \end{tabular}}  & 
\multicolumn{1}{c|}{\begin{tabular}[c]{@{}c@{}} 0.8\\ 189.6 \\ TO \end{tabular}}  & 
\multicolumn{1}{|c}{\begin{tabular}[c]{@{}c@{}}255.2(116) \\ 8.6(38)\\ 8.1(37) \end{tabular}}  & 
\multicolumn{1}{|c|}{\begin{tabular}[c]{@{}c@{}} 334.4(124) \\ TO(173) \\ TO(180) \end{tabular}}  &  
\multicolumn{1}{|c}{\begin{tabular}[c]{@{}c@{}}\textbf{0.2}(5) \\ \textbf{0.1}(3) \\ \textbf{0.1}(3)\end{tabular}}  & 
\multicolumn{1}{|c|}{\begin{tabular}[c]{@{}c@{}}\textbf{0.3}(5)\\ \textbf{8.5}(11) \\ \textbf{235.9}(17) \end{tabular}}  &
\begin{tabular}[c]{@{}c@{}} 3\\ 261\\ 2015\end{tabular}  \\ \hline

\multicolumn{1}{|l|}{\begin{tabular}[c]{@{}l@{}}L1, Saadatpour et al.\\ (2011)\\n=59, u=3\end{tabular}}         & 
\begin{tabular}[c]{@{}l@{}}2\\ 4\\ 6\end{tabular}&
\multicolumn{1}{|c}{\begin{tabular}[c]{@{}c@{}}98.2 \\ TO \\ TO \end{tabular}}  & 
\multicolumn{1}{|c|}{\begin{tabular}[c]{@{}c@{}} -\\ TO \\ TO\end{tabular}}  & 
\multicolumn{1}{|c|}{\begin{tabular}[c]{@{}c@{}}3.7 \\ 119.9 \\ 120.6 \end{tabular}}  & 
\multicolumn{1}{c|}{\begin{tabular}[c]{@{}c@{}} -\\ TO \\ TO \end{tabular}}  & 
\multicolumn{1}{|c}{\begin{tabular}[c]{@{}c@{}}0.8(11) \\ TO(-) \\ TO(-) \end{tabular}}  & 
\multicolumn{1}{|c|}{\begin{tabular}[c]{@{}c@{}} -\\ TO(153) \\ TO(170) \end{tabular}}  &  
\multicolumn{1}{|c}{\begin{tabular}[c]{@{}c@{}}\textbf{0.04}(1) \\ \textbf{0.1}(2) \\ \textbf{0.1}(3)\end{tabular}}  & 
\multicolumn{1}{|c|}{\begin{tabular}[c]{@{}c@{}}-\\ \textbf{2.8}(5) \\ TO(21) \end{tabular}}  &
\begin{tabular}[c]{@{}c@{}} 0\\ 83\\ $\geq$ 2227 \end{tabular}\\ \hline

\multicolumn{1}{|l|}{\begin{tabular}[c]{@{}l@{}}L2, Singh et al. (2012)\\n=66, u=1\end{tabular}}            &
\begin{tabular}[c]{@{}l@{}}2\\ 4\\ 6\end{tabular}&
\multicolumn{1}{|c}{\begin{tabular}[c]{@{}c@{}}\textbf{0.1} \\ \textbf{0.1} \\ \textbf{0.1} \end{tabular}}  & 
\multicolumn{1}{|c|}{\begin{tabular}[c]{@{}c@{}}75 \\ TO \\ TO\end{tabular}}  & 
\multicolumn{1}{|c|}{\begin{tabular}[c]{@{}c@{}}0.2 \\ 0.2 \\ 0.2 \end{tabular}}  & 
\multicolumn{1}{c|}{\begin{tabular}[c]{@{}c@{}}3.4 \\ TO \\ TO \end{tabular}}  & 
\multicolumn{1}{|c}{\begin{tabular}[c]{@{}c@{}}0.3(6) \\ 0.3(6) \\ 0.3(6) \end{tabular}}  & 
\multicolumn{1}{|c|}{\begin{tabular}[c]{@{}c@{}}TO(184) \\ TO(187) \\ TO(189) \end{tabular}}  &  
\multicolumn{1}{|c}{\begin{tabular}[c]{@{}c@{}}\textbf{0.1}(1) \\ \textbf{0.1}(1) \\ \textbf{0.1}(1)\end{tabular}}  & 
\multicolumn{1}{|c|}{\begin{tabular}[c]{@{}c@{}}\textbf{1.2}(1)\\ \textbf{1.2}(1) \\ \textbf{1.2}(1) \end{tabular}}  &
\begin{tabular}[c]{@{}c@{}} 60\\ 60\\ 60\end{tabular}      \\ \hline

\multicolumn{1}{|l|}{\begin{tabular}[c]{@{}l@{}}L3, Grieco et al. (2013)\\n=53, u=3\end{tabular}}            & 
\begin{tabular}[c]{@{}l@{}}2\\ 4\\ 6\end{tabular}&
\multicolumn{1}{|c}{\begin{tabular}[c]{@{}c@{}}24.3 \\ 8.9 \\ 18.3 \end{tabular}}  & 
\multicolumn{1}{|c|}{\begin{tabular}[c]{@{}c@{}}59.8 \\ TO \\ TO\end{tabular}}  & 
\multicolumn{1}{|c|}{\begin{tabular}[c]{@{}c@{}}2.4 \\ 2.4\\ 2.5 \end{tabular}}  & 
\multicolumn{1}{c|}{\begin{tabular}[c]{@{}c@{}}2.4 \\ TO \\ TO \end{tabular}}  & 
\multicolumn{1}{|c}{\begin{tabular}[c]{@{}c@{}}\textbf{0.1}(1) \\ \textbf{0.1}(1) \\ \textbf{0.1}(2) \end{tabular}}  & 
\multicolumn{1}{|c|}{\begin{tabular}[c]{@{}c@{}}TO(159) \\ TO(171) \\ TO(177) \end{tabular}}  &  
\multicolumn{1}{|c}{\begin{tabular}[c]{@{}c@{}}\textbf{0.1}(1) \\ \textbf{0.1}(1) \\ \textbf{0.1}(1)\end{tabular}}  & 
\multicolumn{1}{|c|}{\begin{tabular}[c]{@{}c@{}}\textbf{0.2}(2)\\ \textbf{77}(23)\\ TO(33) \end{tabular}}  &
\begin{tabular}[c]{@{}c@{}} 8\\ 722\\ $\geq$ 2171\end{tabular} \\ \hline

\multicolumn{1}{|l|}{\begin{tabular}[c]{@{}l@{}}L4, Flobak et al. (2015)\\n=75, u=2\end{tabular}}             & 
\begin{tabular}[c]{@{}l@{}}2\\ 4\\ 6\end{tabular}&
\multicolumn{1}{|c}{\begin{tabular}[c]{@{}c@{}}153.1 \\ TO \\ TO \end{tabular}}  & 
\multicolumn{1}{|c|}{\begin{tabular}[c]{@{}c@{}}- \\ TO \\ TO\end{tabular}}  & 
\multicolumn{1}{|c|}{\begin{tabular}[c]{@{}c@{}}6.2 \\ 228.1 \\ 227.8 \end{tabular}}  & 
\multicolumn{1}{c|}{\begin{tabular}[c]{@{}c@{}} -\\ TO \\ TO \end{tabular}}  & 
\multicolumn{1}{|c}{\begin{tabular}[c]{@{}c@{}}\textbf{0.02}(0) \\ \textbf{0.03}(0) \\ \textbf{0.04}(0) \end{tabular}}  & 
\multicolumn{1}{|c|}{\begin{tabular}[c]{@{}c@{}} -\\ TO(112)\\ TO(179) \end{tabular}}  &  
\multicolumn{1}{|c}{\begin{tabular}[c]{@{}c@{}}\textbf{0.02}(0) \\ \textbf{0.03}(0) \\ \textbf{0.04}(0)\end{tabular}}  & 
\multicolumn{1}{|c|}{\begin{tabular}[c]{@{}c@{}}-\\ \textbf{78.2}(1) \\ \textbf{518}(5) \end{tabular}}  &
\begin{tabular}[c]{@{}c@{}} 0\\ 1302\\ 3435\end{tabular}   \\ \hline
\end{tabular}
}
\end{table}
\begin{table}[p]
\caption{\label{exp:syn_moon}%
Execution times (in sec.) for determining the satisfiability of the synthesis on the Moon dataset with a 10 minutes timeout (TO); $\dagger$ indicates unsatisfiable instances.
Number of identified counter-examples is in parentheses.}
\resizebox{\textwidth}{!}{\begin{tabular}{@{}c|c|c|c|c|c|c|c|c|c|c|c|@{}}
\cline{2-12}
                                       & S1 (20)                                                & S2 (17)                                                & S3 (18)                                                  & S4 (13)                                               & M1 (28)                                                & M2 (32)     
                                       & M3 (35)                                          & L1 (59)                                                & L2 (66)                                                  & L3 (53)                                               & L4 (75)                                               \\ \hline

\multicolumn{1}{|c|}{CEGAR-0} &
0.2(0) & TO(59,584) & TO(53,300) & TO(63,914) &TO(44,585) & TO(4,451) &TO(9,987) & TO(6,596) & TO(24,686) &  0.2(0)& 1.2(0) \\

\multicolumn{1}{|c|}{CEGAR-1} &
\textbf{0.1}(0)& 1(20)$\dagger$ & 12.4(49)$\dagger$ & 0.04(1) & \textbf{0.2}(4) & TO(5) &249.3(5) &TO(36) & \textbf{0.1}(2)$\dagger$ & \textbf{0.2}(0) & 1.3(0)\\ 

\multicolumn{1}{|c|}{CEGAR-2} &
\textbf{0.1}(0) & \textbf{0.1}(3)$\dagger$ & \textbf{0.1}(3)$\dagger$ & \textbf{0.03}(1) & \textbf{0.2}(6) & \textbf{97.8}(1) & \textbf{84.2}(3)& \textbf{2.4}(4) & \textbf{0.1}(1)$\dagger$ & \textbf{0.2}(0) & \textbf{1.2}(0)\\ \hline

\end{tabular}
}
\end{table}
of the marker reprogramming, depending on the maximum size $k$ of perturbations to identify.
CEGAR-2 is the only one able to always determine the satisfiability of the instances 
in 10 minutes.
Moreover, it largely outperforms other methods for the enumeration as soon as $k$ or $n$ is large.
Through the results obtained with CEGAR-1 and CEGAR-2 on the datasets, we can conclude that 
imposing the trap spaces of counter-example
configurations to contain a trap spaces matching with the marker helps to drastically reduce the number
of iterations to exhaust the solution space.
Table~\ref{exp:syn_moon} provides a similar picture for the synthesis problem.
Note that CEGAR-0 only solved instances where the first generated BN verified the universal property.

\paragraph{Trappist instances.}
Fig.~\ref{fig:trap3} and \ref{fig:trapsyn} provide summary statistics of CEGAR-2 performance for the
reprogramming and synthesis, grouped by network size.
The execution time is limited to 30 minutes for reprogramming and 60 minutes for synthesis.
Full results are provided in Appendix~\ref{sec:results-trappist}.
The dataset consists of much larger networks than in Moon, although the employed markers may not be
biologically meaningful.
For the reprogramming, our prototype always decides whether the instance admits a solution
for all networks up to 400 components.
Failed exhaustive enumerations were then likely caused by a large combinatorics of solutions with large $k$.
For BNs above 1,000 components, our prototype managed to determine the satisfiability of 75\% of the
instances within the given time limit.
In the case of the synthesis, our prototype was able to solve more than 80\% of the instances of the
BNs below 400 components, and 50\% of networks above 1,000.

\begin{figure}[tb]
    \centering
    \includegraphics[width=\textwidth]{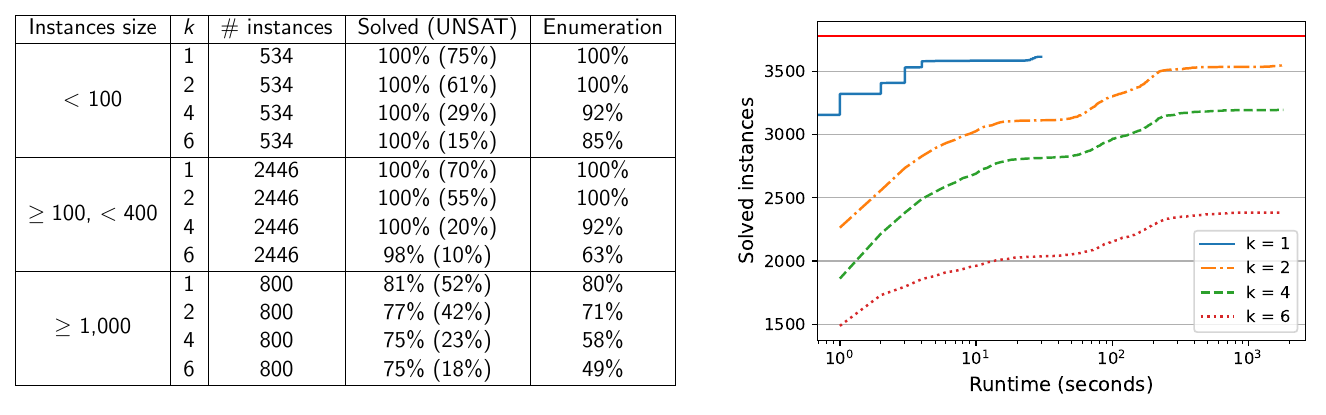}
    \vspace{-1cm}
    \caption{\label{fig:trap3}
        Summary of results for the reprogramming of Trappist instances.
        (left) ratio of instances for which the satisfiability has been determined with
        relative ratio of unsatisfiable, and for which the exhaustive
        enumeration has been completed within 30min.
    (right) number of instances with exhaustive enumeration completed within given time. 
    The red line indicates the total number of instances}
\end{figure}
\begin{figure}[tb]
    \centering
    \includegraphics[width=\textwidth]{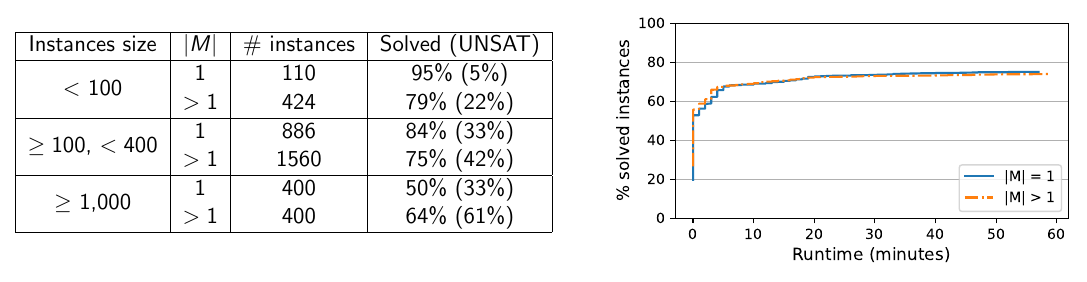}
    \vspace{-1cm}
    \caption{\label{fig:trapsyn}
    Summary of results for the synthesis of Trappist instances within a 1h limit}
\end{figure}

\section{Discussion}

We demonstrated a new approach to efficiently reason on universal properties over
minimal trap spaces of BNs, by iterative refinements of a logical satisfiability problem guided by
counter-examples.
Our prototype scaled to BNs with thousands of components for solving reprogramming and
synthesis problems.

In the experiments, it appeared that only a limited number of counter-examples are necessary to
exhaust the solution space of reprogramming instances and for deciding the satisfiability of the
synthesis instances.
Indeed, 90\% of the trappist instances required less than 20
counter-example configurations, the maximum being 130 counter-example configurations for one
case.
It is very low compared to the number of configurations of considered networks, and actually always
lower than the number of components.
Thus, in the conducted experiments, the 3-QBF problems have been resolved with a (sub)linear number
(with the dimension of the BN) of 2-QBF problems.
Theoretical and experimental work may further investigate the connection between the number of
counter-examples and the diversity of minimal trap spaces that can be generated by the candidate
perturbations and Boolean functions.
This could offer a finer insight on the practical complexity of tackled problems.

Our method can be employed to solve more general reprogramming and synthesis problems, for instance
for enforcing the absence of cyclic attractors,
or universal marker properties over attractors reachable in the most permissive dynamics from a given
configuration.
We plan to embed this generic solving technique in the \textsc{BoNesis} software, promising a
scalable BN synthesis from rich and biologically-relevant dynamical properties.

\subsubsection{Acknowledgements}
SR, GML and LP
We acknowledge support of the French Agence Nationale pour la Recherche (ANR)
in the scope of the project ``\href{https://bnediction.github.io}{BNeDiction}'' (ANR-20-CE45-0001)
and of the project ``PING/ACK'' (ANR-18-CE40-0011).
Part of the experiments presented in this paper were carried out using the PlaFRIM experimental testbed,
supported by Inria, CNRS (LABRI and IMB), Université de Bordeaux, Bordeaux INP and Conseil Régional
d’Aquitaine (see \url{https://www.plafrim.fr}).

\bibliographystyle{abbrvurl}
\bibliography{biblio}

\begin{thebibliography}{10}

\bibitem{abio2016cnf}
I.~Ab{\'\i}o, G.~Gange, V.~Mayer-Eichberger, and P.~J. Stuckey.
\newblock On cnf encodings of decision diagrams.
\newblock In {\em Integration of AI and OR Techniques in Constraint
  Programming: 13th International Conference, CPAIOR 2016, Banff, AB, Canada,
  May 29-June 1, 2016, Proceedings 13}, pages 1--17. Springer, 2016.
\newblock \href {https://doi.org/10.1007/978-3-319-33954-2_1}
  {\path{doi:10.1007/978-3-319-33954-2_1}}.

\bibitem{aracena2008maximum}
J.~Aracena.
\newblock Maximum number of fixed points in regulatory boolean networks.
\newblock {\em Bulletin of mathematical biology}, 70:1398--1409, 2008.
\newblock \href {https://doi.org/10.1007/s11538-008-9304-7}
  {\path{doi:10.1007/s11538-008-9304-7}}.

\bibitem{aracena2004positive}
J.~Aracena, J.~Demongeot, and E.~Goles.
\newblock Positive and negative circuits in discrete neural networks.
\newblock {\em IEEE Transactions on Neural Networks}, 15(1):77--83, 2004.
\newblock \href {https://doi.org/10.1109/TNN.2003.821555}
  {\path{doi:10.1109/TNN.2003.821555}}.

\bibitem{baral2003knowledge}
C.~Baral.
\newblock {\em Knowledge representation, reasoning and declarative problem
  solving}.
\newblock Cambridge university press, 2003.
\newblock \href {https://doi.org/10.1017/CBO9780511543357}
  {\path{doi:10.1017/CBO9780511543357}}.

\bibitem{Benes2020}
N.~Bene{\v{s}}, L.~Brim, J.~Kadlecaj, S.~Pastva, and D.~{\v{S}}afr{\'{a}}nek.
\newblock {AEON}: Attractor bifurcation analysis of parametrised boolean
  networks.
\newblock In {\em Computer Aided Verification}, pages 569--581. Springer
  International Publishing, 2020.
\newblock \href {https://doi.org/10.1007/978-3-030-53288-8_28}
  {\path{doi:10.1007/978-3-030-53288-8_28}}.

\bibitem{biane2018causal}
C.~Biane and F.~Delaplace.
\newblock Causal reasoning on boolean control networks based on abduction:
  theory and application to cancer drug discovery.
\newblock {\em IEEE/ACM transactions on computational biology and
  bioinformatics}, 16(5):1574--1585, 2018.
\newblock \href {https://doi.org/10.1109/tcbb.2018.2889102}
  {\path{doi:10.1109/tcbb.2018.2889102}}.

\bibitem{booksat}
A.~Biere, M.~Heule, H.~van Maaren, and T.~Walsh, editors.
\newblock {\em Handbook of Satisfiability - Second Edition}, volume 336 of {\em
  Frontiers in Artificial Intelligence and Applications}.
\newblock {IOS} Press, 2021.
\newblock \href {https://doi.org/10.3233/FAIA336} {\path{doi:10.3233/FAIA336}}.

\bibitem{propagationref}
L.~Bordeaux and J.~Marques-Silva.
\newblock Knowledge compilation with empowerment.
\newblock In {\em SOFSEM 2012: Theory and Practice of Computer Science}, pages
  612--624, Berlin, Heidelberg, 2012. Springer.
\newblock \href {https://doi.org/10.1007/978-3-642-27660-6_50}
  {\path{doi:10.1007/978-3-642-27660-6_50}}.

\bibitem{bordeaux2012knowledge}
L.~Bordeaux and J.~Marques-Silva.
\newblock Knowledge compilation with empowerment.
\newblock In {\em SOFSEM 2012: Theory and Practice of Computer Science}, pages
  612--624. Springer, 2012.
\newblock \href {https://doi.org/10.1007/978-3-642-27660-6_50}
  {\path{doi:10.1007/978-3-642-27660-6_50}}.

\bibitem{Bning2021TheoryOQ}
H.~K. B{\"u}ning and U.~Bubeck.
\newblock Theory of quantified boolean formulas.
\newblock In {\em Handbook of Satisfiability}, 2021.
\newblock \href {https://doi.org/10.3233/978-1-58603-929-5-735}
  {\path{doi:10.3233/978-1-58603-929-5-735}}.

\bibitem{chevalier2019synthesis}
S.~Chevalier, C.~Froidevaux, L.~Paulev{\'e}, and A.~Zinovyev.
\newblock Synthesis of boolean networks from biological dynamical constraints
  using answer-set programming.
\newblock In {\em 2019 IEEE 31st International Conference on Tools with
  Artificial Intelligence (ICTAI)}, pages 34--41. IEEE, 2019.
\newblock \href {https://doi.org/10.1109/ICTAI.2019.00014}
  {\path{doi:10.1109/ICTAI.2019.00014}}.

\bibitem{Chevalier2020-CMSB}
S.~Chevalier, V.~No{\"e}l, L.~Calzone, A.~Zinovyev, and L.~Paulev{\'e}.
\newblock {Synthesis and Simulation of Ensembles of Boolean Networks for Cell
  Fate Decision}.
\newblock In {\em {CMSB 2020 - 18th International Conference on Computational
  Methods in Systems Biology}}, volume 12314 of {\em Lecture Notes in Computer
  Science}, pages 193--209, Cham, 2020. Springer.
\newblock \href {https://doi.org/10.1007/978-3-030-60327-4\_11}
  {\path{doi:10.1007/978-3-030-60327-4\_11}}.

\bibitem{cifuentes2022control}
L.~Cifuentes-Fontanals, E.~Tonello, and H.~Siebert.
\newblock Control in boolean networks with model checking.
\newblock {\em Frontiers in Applied Mathematics and Statistics}, 8:838546,
  2022.
\newblock \href {https://doi.org/10.3389/fams.2022.838546}
  {\path{doi:10.3389/fams.2022.838546}}.

\bibitem{clarke2003counterexample}
E.~Clarke, O.~Grumberg, S.~Jha, Y.~Lu, and H.~Veith.
\newblock Counterexample-guided abstraction refinement for symbolic model
  checking.
\newblock {\em Journal of the ACM (JACM)}, 50(5):752--794, 2003.
\newblock \href {https://doi.org/10.1145/876638.876643}
  {\path{doi:10.1145/876638.876643}}.

\bibitem{dorier2016boolean}
J.~Dorier, I.~Crespo, A.~Niknejad, R.~Liechti, M.~Ebeling, and I.~Xenarios.
\newblock Boolean regulatory network reconstruction using literature based
  knowledge with a genetic algorithm optimization method.
\newblock {\em BMC bioinformatics}, 17:1--19, 2016.
\newblock \href {https://doi.org/10.1186/s12859-016-1287-z}
  {\path{doi:10.1186/s12859-016-1287-z}}.

\bibitem{drechsler2013binary}
R.~Drechsler and B.~Becker.
\newblock {\em Binary decision diagrams: theory and implementation}.
\newblock Springer Science \& Business Media, 2013.
\newblock \href {https://doi.org/10.1007/978-1-4757-2892-7}
  {\path{doi:10.1007/978-1-4757-2892-7}}.

\bibitem{frisch2010sat}
A.~M. Frisch and P.~A. Giannaros.
\newblock Sat encodings of the at-most-k constraint: Some old, some new, some
  fast, some slow.
\newblock 2010.
\newblock \href {https://doi.org/10.1007/978-3-030-30446-1_7}
  {\path{doi:10.1007/978-3-030-30446-1_7}}.

\bibitem{frohlich2014idq}
A.~Fr{\"o}hlich, G.~Kov{\'a}sznai, A.~Biere, and H.~Veith.
\newblock idq: Instantiation-based dqbf solving.
\newblock In {\em POS@ SAT}, pages 103--116, 2014.
\newblock \href {https://doi.org/10.29007/1s5k} {\path{doi:10.29007/1s5k}}.

\bibitem{gebser2012answer}
M.~Gebser, R.~Kaminski, B.~Kaufmann, and T.~Schaub.
\newblock Answer set solving in practice.
\newblock {\em Synthesis lectures on artificial intelligence and machine
  learning}, 6(3):1--238, 2012.
\newblock \href {https://doi.org/10.1007/978-3-031-01561-8}
  {\path{doi:10.1007/978-3-031-01561-8}}.

\bibitem{Gebser2018}
M.~Gebser, R.~Kaminski, B.~Kaufmann, and T.~Schaub.
\newblock Multi-shot {ASP} solving with clingo.
\newblock {\em Theory and Practice of Logic Programming}, 19(1):27--82, 2018.
\newblock \href {https://doi.org/10.1017/s1471068418000054}
  {\path{doi:10.1017/s1471068418000054}}.

\bibitem{Goldfeder2019}
J.~Goldfeder and H.~Kugler.
\newblock {BRE}:{IN} - a backend for reasoning about interaction networks with
  temporal logic.
\newblock In {\em Computational Methods in Systems Biology}, pages 289--295.
  Springer International Publishing, 2019.
\newblock \href {https://doi.org/10.1007/978-3-030-31304-3_15}
  {\path{doi:10.1007/978-3-030-31304-3_15}}.

\bibitem{klarner2018basins}
H.~Klarner, F.~Heinitz, S.~Nee, and H.~Siebert.
\newblock Basins of attraction, commitment sets, and phenotypes of boolean
  networks.
\newblock {\em IEEE/ACM transactions on computational biology and
  bioinformatics}, 17(4):1115--1124, 2018.
\newblock \href {https://doi.org/10.1109/TCBB.2018.2879097}
  {\path{doi:10.1109/TCBB.2018.2879097}}.

\bibitem{klarner2015approximating}
H.~Klarner and H.~Siebert.
\newblock Approximating attractors of boolean networks by iterative ctl model
  checking.
\newblock {\em Frontiers in bioengineering and biotechnology}, 3:130, 2015.
\newblock \href {https://doi.org/10.3389/fbioe.2015.00130}
  {\path{doi:10.3389/fbioe.2015.00130}}.

\bibitem{lonsing2017depqbf}
F.~Lonsing and U.~Egly.
\newblock Depqbf 6.0: A search-based qbf solver beyond traditional qcdcl.
\newblock In {\em Automated Deduction--CADE 26: 26th International Conference
  on Automated Deduction, Gothenburg, Sweden, August 6--11, 2017, Proceedings},
  pages 371--384. Springer, 2017.
\newblock \href {https://doi.org/10.1007/978-3-319-63046-5_23}
  {\path{doi:10.1007/978-3-319-63046-5_23}}.

\bibitem{montagud2022patient}
A.~Montagud, J.~B{\'e}al, L.~Tobalina, P.~Traynard, V.~Subramanian, B.~Szalai,
  R.~Alf{\"o}ldi, L.~Pusk{\'a}s, A.~Valencia, E.~Barillot, et~al.
\newblock Patient-specific boolean models of signalling networks guide
  personalised treatments.
\newblock {\em Elife}, 11:e72626, 2022.
\newblock \href {https://doi.org/10.7554/eLife.72626}
  {\path{doi:10.7554/eLife.72626}}.

\bibitem{montagud22}
A.~Montagud, J.~B{\'{e}}al, L.~Tobalina, P.~Traynard, V.~Subramanian,
  B.~Szalai, R.~Alföldi, L.~Pusk{\'{a}}s, A.~Valencia, E.~Barillot,
  J.~Saez-Rodriguez, and L.~Calzone.
\newblock Patient-specific boolean models of signalling networks guide
  personalised treatments.
\newblock {\em {eLife}}, 11, 2022.
\newblock \href {https://doi.org/10.7554/elife.72626}
  {\path{doi:10.7554/elife.72626}}.

\bibitem{moon2022bilevel}
K.~Moon, K.~Lee, S.~Chopra, and S.~Kwon.
\newblock Bilevel integer programming on a boolean network for discovering
  critical genetic alterations in cancer development and therapy.
\newblock {\em European Journal of Operational Research}, 300(2):743--754,
  2022.
\newblock \href {https://doi.org/10.1016/j.ejor.2021.10.019}
  {\path{doi:10.1016/j.ejor.2021.10.019}}.

\bibitem{moon2022computational}
K.~Moon, K.~Lee, and L.~Paulev{\'e}.
\newblock Computational complexity of minimal trap spaces in boolean networks.
\newblock {\em ArXiv e-prints}, 2022.
\newblock \href {https://doi.org/10.48550/arXiv.2212.12756}
  {\path{doi:10.48550/arXiv.2212.12756}}.

\bibitem{pauleve2022marker}
L.~Paulev{\'e}.
\newblock Marker and source-marker reprogramming of {Most} {Permissive}
  {Boolean} networks and ensembles with {BoNesis }.
\newblock {\em Peer Community Journal}, 3, 2023.
\newblock \href {https://doi.org/10.24072/pcjournal.255}
  {\path{doi:10.24072/pcjournal.255}}.

\bibitem{naturepauleve}
L.~Paulev{\'e}, J.~Kol{\v c}{\'a}k, T.~Chatain, and S.~Haar.
\newblock Reconciling qualitative, abstract, and scalable modeling of
  biological networks.
\newblock {\em Nature Communications}, 11(1):4256, 2020.
\newblock \href {https://doi.org/10.1038/s41467-020-18112-5}
  {\path{doi:10.1038/s41467-020-18112-5}}.

\bibitem{automata21}
L.~Paulev{\'e} and S.~Sen{\'e}.
\newblock {Non-deterministic updates of Boolean networks}.
\newblock In {\em 27th IFIP WG 1.5 International Workshop on Cellular Automata
  and Discrete Complex Systems (AUTOMATA 2021)}, volume~90 of {\em Open Access
  Series in Informatics (OASIcs)}, pages 10:1--10:16. Schloss Dagstuhl --
  Leibniz-Zentrum f{\"u}r Informatik, 2021.
\newblock \href {https://doi.org/10.4230/OASIcs.AUTOMATA.2021.10}
  {\path{doi:10.4230/OASIcs.AUTOMATA.2021.10}}.

\bibitem{pauleve2022boolean}
L.~Paulev{\'e} and S.~Sen{\'e}.
\newblock {Boolean networks and their dynamics: the impact of updates}.
\newblock In {\em {Systems Biology Modelling and Analysis: Formal
  Bioinformatics Methods and Tools}}. {Wiley}, 2022.
\newblock \href {https://doi.org/10.1002/9781119716600.ch6}
  {\path{doi:10.1002/9781119716600.ch6}}.

\bibitem{rabe2015caqe}
M.~N. Rabe and L.~Tentrup.
\newblock Caqe: a certifying qbf solver.
\newblock In {\em 2015 Formal Methods in Computer-Aided Design (FMCAD)}, pages
  136--143. IEEE, 2015.
\newblock \href {https://doi.org/10.1109/FMCAD.2015.7542263}
  {\path{doi:10.1109/FMCAD.2015.7542263}}.

\bibitem{Reda2022}
C.~R{\'{e}}da and A.~Delahaye-Duriez.
\newblock Prioritization of~candidate genes through boolean networks.
\newblock In {\em Computational Methods in Systems Biology}, pages 89--121.
  Springer International Publishing, 2022.
\newblock \href {https://doi.org/10.1007/978-3-031-15034-0_5}
  {\path{doi:10.1007/978-3-031-15034-0_5}}.

\bibitem{dqbdd}
J.~S{\'i}{\v{c}} and J.~Strej{\v{c}}ek.
\newblock Dqbdd: An efficient bdd-based dqbf solver.
\newblock In {\em Theory and Applications of Satisfiability Testing -- SAT
  2021}, pages 535--544, Cham, 2021. Springer International Publishing.
\newblock \href {https://doi.org/10.1007/978-3-030-80223-3_36}
  {\path{doi:10.1007/978-3-030-80223-3_36}}.

\bibitem{trinh2022minimal}
V.-G. Trinh, B.~Benhamou, K.~Hiraishi, and S.~Soliman.
\newblock Minimal trap spaces of logical models are maximal siphons of their
  petri net encoding.
\newblock In {\em Computational Methods in Systems Biology: 20th International
  Conference, CMSB 2022, Bucharest, Romania, September 14--16, 2022,
  Proceedings}, pages 158--176. Springer, 2022.
\newblock \href {https://doi.org/10.1007/978-3-031-15034-0_8}
  {\path{doi:10.1007/978-3-031-15034-0_8}}.

\bibitem{yordanov2016method}
B.~Yordanov, S.-J. Dunn, H.~Kugler, A.~Smith, G.~Martello, and S.~Emmott.
\newblock A method to identify and analyze biological programs through
  automated reasoning.
\newblock {\em NPJ systems biology and applications}, 2(1):1--16, 2016.
\newblock \href {https://doi.org/10.1038/npjsba.2016.10}
  {\path{doi:10.1038/npjsba.2016.10}}.

\bibitem{zanudo21}
J.~G.~T. Za{\~{n}}udo, P.~Mao, C.~Alcon, K.~Kowalski, G.~N. Johnson, G.~Xu,
  J.~Baselga, M.~Scaltriti, A.~Letai, J.~Montero, R.~Albert, and N.~Wagle.
\newblock Cell line-specific network models of {ER}+ breast cancer identify
  potential {PI}3ka inhibitor resistance mechanisms and drug combinations.
\newblock {\em Cancer Research}, 81(17):4603--4617, 2021.
\newblock \href {https://doi.org/10.1158/0008-5472.can-21-1208}
  {\path{doi:10.1158/0008-5472.can-21-1208}}.

\end{thebibliography}

\appendix

\section{Encoding of Trap Space in propositional logic}\label{sec:ts-cnf}
Following the explanation of the CEGAR approach, for reprogramming, we can see that a key element, in being able to apply this idea, is the computation of the $TS_f(x)$. 
For this reason, we will present how it is possible to encode the problem in a Boolean formula.
After a general explanation, we will briefly present how this idea can be applied both for locally monotone BNs and, for example, for all BNs with all $f_i$ given in a propagation complete representation. 

\smallskip

As explained in Section \ref{sec:complexity}, a possible approach to compute $TS_f(x)$ can be to start from $TS_f(x)=x$ and then, considering one component at a time, verify if there exists a $z\in TS_f(x))$ such that $f_i(z)\neq (TS_f(x))_i$.  
In this last case, $(TS_f(x))_i$ becomes $\ast$ and it will remain a $\ast$ forever. 
To obtain the smallest trap space containing the configuration, the process must be iterated until $TS_f(x)$ cannot be changed anymore and the closeness property is then satisfied.

\begin{example}\label{ex:ts}
Let us consider the BN $f_1(x)=x_2$, $f_2(x)=x_3 \land x_4$, $f_3(x)=x_4 \land \neg x_2$ and $f_4(x)=\neg x_1 \lor x_4$. 
To compute $TS_f(0000)$, let us start from $TS_f(0000)=0000$ (with $0000)=\set{0000}$). 
Considering one component at a time (performing the first iteration), we obtain $f_1(0000)=0$, $f_2(0000)=0$, and  $f_3(0000)=0$ and $f_4(0000)=1$ which implies $TS_f(0000)=000\ast$ and $000\ast)=\set{0001,0000}$. 
Performing the second iteration, we discover that $\nexists z\in 000\ast)$ such that $f_1(z)$ or $f_2(z)$ diffear from $0$, but $f_3(0001)=1$.
Then, $TS_f(0000)=00\ast\ast$ and $00\ast\ast)=\set{0011,0010,0001,0000}$.
In the third iteration, $f_2(0011)=1$ brings $TS_f(0000)=0\ast^3$.
Finally, in the last iteration, $f_1(0111)=1$ implies $TS_f(0000)=\ast^4$.
\end{example}

It is clear that $TS_f(x)$ is computed with different iterations over the entire configuration until it is impossible to change a $h_i$ in a $\ast$ or until $h=\ast^n$.
At each step at least a $h_i$ changes, for this reason we need at most $n$ iterations to compute the $TS_f(x)$ of a given $x$.
The process is the same if the computation is based on a perturbation $P$ because $f/P$ is considered instead of $f$.

\smallskip

To encode a configuration $x$, one can use $n$ Boolean variables where $\mathsf{x}_i$ is true iff $x_i=1$, false otherwise. 
The idea is to reproduce the iterative process explained above in which we need $n$ iterations to compute the $h=TS_f(x)$.
Recall that a subcube $h$ is an element of $\set{0,1,\ast}^n$. 
Informally, each $h_i$ can be a $0$, $1$, or both (\ie, a $\ast$).
For this reason, one can use $2n$ Boolean variables to encode a $h$. 
In particular, we will use $n$ variables $h_{(1,i)}$ and $n$ variables $h_{(0,i)}$, with $i\in\set{1,\ldots,n}$, where $h_{(1,i)}$ is true iff $h_i\in\set{1,\ast}$ and $h_{(0,i)}$ is true iff $h_i\in\set{0,\ast}$. 
At this point, if both $h_{(0,i)}$ and $h_{(1,i)}$ are true, $h_i=\ast$.

Let us denote $h^t$ the result of the $t$-th iteration.
The first clauses of our Boolean formula (which can be written in conjunctive normal form) are $h^0_{(1,i)}\iff \mathsf{x}_i$ and $h^0_{(0,i)}\iff \neg \mathsf{x}_i$. 
This corresponds to set in the beginning $TS_f(x)=x$.
To perform the first iteration, it is necessary to check the result of the update procedure over each component of $h^0$. 
Remember that we can only add $\ast$ in all iterations. 
Hence, variables that are true at one iteration will be true in all further iterations and, in the case of a $h^t_{(0,i)}$ true, giving to $h^t_{(1,i)}$ the value true corresponds to a new $\ast$ in the subcube (of course, the symmetric case also exists).
Then, $h^1_{(1,i)}$ is true if $h^0_{(1,i)}$ is true (for the reason just explained), or if according to the local function $f_i(x)=f_i(h^0)=1$. 
Following the same idea, $h^1_{(0,i)}$ is true if $h^0_{(0,i)}$ is true or if according to the local function $f_i(x)=0$.
At this point,
$$h^1_{(1,i)}\iff (h^0_{(1,i)} \lor \descbox{f_i(h^0)=1} \;) \text{, and } h^1_{(0,i)}\iff (h^0_{(0,i)} \lor \descbox{f_i(h^0)=0} \;).$$

These clauses to compute the first iteration can be generalized to compute the $(t+1)$-th iteration from the previous one. Indeed, 
$$h^{t+1}_{(1,i)}\iff (h^t_{(1,i)} \lor \descbox{\exists z\in h^t) \text{ s.t. } f_i(z)=1} \;) \text{, and }$$
$$h^{t+1}_{(0,i)}\iff (h^t_{(0,i)} \lor \descbox{\exists z\in h^t) \text{ s.t. } f_i(z)=0} \;).$$
These conditions must be translated into new clauses for our Boolean formula.
Remark that it leads to compute the $TS_f(x)$ exactly as explained before. 
However, the difference is that we will always "perform" $n$ iterations (also in the case $TS_f(x)$ does not change between two iterations at some point).
For this reason, the formula presents $2n \cdot (n+1)$ Boolean variables ($2n$ variables per iteration), and the $TS_f(x)$ can be discovered looking the values of the Boolean variables in $h^n$.

Following this idea, it is interesting to calculate the $TS_f(x)$ if the unate local functions are expressed as prepositional formulas or if the non-unate functions are expressed in a propagation complete representation.
This aspect will change just how the boxed parts of the formulas are handled. 

\subsection{Locally monotone case.} Considering a BN with unate local functions, we define $\mathsf{f}_i(h^t)$ to compute the local function $f_i$ over the $t$-th iteration.
Formally, $\mathsf{f}_i(h^t)$ is obtained from $f_i$ replacing every $\neg x_j$ with $h^t_{(0,j)}$, and every $x_j$ with $h^t_{(1,j)}$.
In this way, $\mathsf{f}_i(h^t)$ is true iff $\exists z\in h^t)$ such that $f_i(z)=1$, and false otherwise.
Similarly, $\mathsf{\overline{f}}_i(h^t)$ (obtained from $\neg f_i$ replacing every $\neg x_j$ with $h^t_{(0,j)}$, and every $x_j$ with $h^t_{(1,j)}$) is true iff $\exists z\in h^t)$ such that $f_i(z)=0$, and false otherwise. 
Replacing the boxed parts above with $\mathsf{f}_i(h^t)$ and $\mathsf{\overline{f}}_i(h^t)$, one can easily obtain the clauses of the Boolean formula to compute the smaller trap space containing $x$.
Let us point out the fact that all variables in $h^t$ depends only on variables in $h^{t-1}$ (for all $t\in\set{1,\ldots, n}$) and the variables in $h^0$ depends on the one used to encode the configuration $x$. 
The computation of a $TS_f(x)$ results in a CNF which is \emph{propagation complete} \cite{bordeaux2012knowledge} \ie, the basic unit propagation mechanism is able to deduce all the literals that are logically valid.

\subsection{General case.} 
The computation of a $TS_f(x)$ boils down to decide a finite number of times if $\exists y\in c(h^t): f_i(y)\neq h^t_i$. 
It is known that such a decision can be deterministically computed in polynomial time whenever, in general, $f_i$ are given in \emph{propagation complete} representations (BDDs, Petri nets, etc.).

A Reduced-Ordered BDD is a Directed Acyclic Graph (DAG) used to represent a Boolean formula. 
It comprises a \emph{root} node (\ie a node without incoming arcs) and a series of internal nodes, also called branch nodes, usually characterized by a name (or a value) to refer to one of the Boolean variables in the formula. 
Each node has two outgoing arcs to represent the assignment of false or true to a certain variable. 
Finally, there are two terminal nodes (also called \emph{sink nodes}) which represent the final value of the Boolean function.  
Hence, in the whole structure, a path from the root to a terminal node represents an assignment to the variables of the formula. 

According to this definition of the structure, if the non-unate local functions are given as BDDs, one is able to decide in polynomial time the result of local functions for a given configuration as it corresponds to cross the structure according to the values in the configuration. 
To evaluate the function $f_i$ over a subcube $h^t$, the idea is the same for all $j\in\set{1,\ldots,n}$ such that $h^t_{(0,j)} \lor h^t_{(1,j)}$ but, in the case of a $h_j=\ast$ (\ie, $h^t_{(0,j)} \land h^t_{(1,j)}$), we need to explore both sides of the graph. 
If at the end, it is possible to reach the sink node true, then $\exists z\in h^t)$ such that $f_i(z)=1$, and if it is possible to reach the sink node false, we know that $\exists z\in c(h^t)$ such that $f_i(z)=0$.
The exploration of the structure can be translated in propositional formulas \cite{abio2016cnf} or
in ASP \cite{chevalier2019synthesis}. 
Then, the boxed parts above can be replaced by a propositional formula that is true whether, in the BDD of the local function $i$, it is possible to obtain an update result $1$ based on the values of the variables of the $t$-th iteration, false otherwise.

\section{QBF model for the marker-reprogramming problem}\label{app:qbf}
Given the idea to encode the computation of a $TS_f(x)$ in a Boolean formula, let us present how it is possible to encode the whole reprogramming problem \eqref{p3}. 
For the sake of simplicity, we continue henceforth by considering the monotone case and its notation, but remember that it is only a choice for the explanation.

\smallskip

A Quantified Boolean Formula (QBF) in quantified conjunctive normal form (QCNF) consists of a quantifier \emph{prefix} and a CNF formula, called \emph{matrix} \cite{Bning2021TheoryOQ}.
The prefix is a sequence $Q_1V_1Q_2V_2 \ldots Q_lV_l$ of $l$ levels of quantification, where $V_1,V_2,\ldots,V_l$ are sets of pairwise distinct Boolean variables and $Q_i \in \set{\forall,\exists}$ for $i\in\set{1,\ldots,l}$. 
The reprogramming problem \eqref{p3} can be encoded in a QCNF with: 
\begin{itemize}
\item $2n$ variables to model $P$;
\item $n$ variables to model a configuration $x\in\B^n$, and $2n \cdot (n+1)$ for $TS_{f/P}(x)$;
\item $n$ variables to model $y\in TS_{f/P}(x)$, and $2n \cdot (n+1)$ for $TS_{f/P}(y)$;
\item $2n$ variables to encode the constraint $TS_{f/P}(x) \neq TS_{f/P}(y)$.
\end{itemize}
Let us present in more details the variables and the clauses involved in the QCNF.

As presented in Section \ref{sec:rep}, the problem presents $l=3$ levels of quantification.
A perturbation $P$ consists of some components of the BN clamped to Boolean values. 
Then, one can define $n$ Boolean variables to model if a component is involved in $P$ (\ie, $clamped_i$ is true with $i\in\set{1,\ldots,n}$) or not (\ie, $clamped_i$ is false). 
However, it is necessary to associate a Boolean value to the clamped components. 
Then, $n$ variables ($value_i$) can model the constant Boolean value of the clamped component $i$. 
It is clear now that, to model $P$, we use $2n$ Boolean variables existentially quantified in the first level (\ie, $Q_1=\exists$ and $V_1=\bigcup_{i\in\set{1,\ldots,n}}clamped_i \cup \bigcup_{i\in\set{1,\ldots,n}}value_i$).
As explained before, we consider the possibility to limit the number of components that can be involved in a perturbation $P$. 
For this reason, we need to add in the CNF a set of clauses to encode this constraint. 
We present, later on, how we decide to handle this aspect.

According to \eqref{p3}, in the second level of quantification, we have $Q_2=\forall$ and $V_2=\set{\mathsf{x}_1,\mathsf{x}_2,\ldots,\mathsf{x}_n}$.

At this point, we need to model the computation of the $TS_{f/P}(x)$ by editing the idea explained above for $TS_f(x)$.
The main difference, here, is that a $h^t_{(1,i)}$ is true if at least one of the following conditions hold: 
\begin{itemize}
\item $h^{t-1}_{(1,i)}$ is true (\ie, $h_i$ in the previous iteration $t-1$ is $1$ or $\ast$);
\item according to the (not perturbed) local function $f_i$ the updated value can be a $1$;
\item $clamped_i$ and $value_i$ are true (\ie, the component is forced to value $1$).
\end{itemize}  
Likewise, a $h^t_{(0,i)}$ is true if at least one of the following conditions hold: $h^{t-1}_{(0,i)}$ is true, according to the (not perturbed) local function $f_i$ the updated value can be a $0$, or $clamped_i$ is true and $value_i$ is false (\ie, the component is forced to value $0$).  
Indeed, the conditions to compute the $TS_{f/P}(x)$ are 
$$h^{t+1}_{(1,i)}\iff \Big(h^t_{(1,i)} \lor (\mathsf{f}_i(h^t) \land \neg clamped_i) \lor (clamped_i \land value_i)\Big)\text{, and}$$
$$h^{t+1}_{(0,i)}\iff \Big(h^t_{(0,i)} \lor (\mathsf{\overline{f}}_i(h^t) \land \neg clamped_i) \lor (clamped_i \land \neg value_i)\Big)$$
for all $t\in\set{1,\ldots,n}$. 
These clauses are part of the matrix, as well as $h^0_{(1,i)}\iff \mathsf{x}_i$ and $h^0_{(0,i)}\iff \neg \mathsf{x}_i$ to set in the beginning $TS_{f/P}(x)=x$.

Now, we need to consider $y$ and $TS_{f/P}(y)$. 
The configuration $y$ requires $n$ Boolean variable as $x$.
We denote these variables $\mathsf{y}_i$ with $i\in\set{1,\ldots,n}$.
However, $y$ must be a configuration in $TS_{f/P}(x))$. 
To encode this requirement, we need to add few clauses.
In particular, $\neg(\mathsf{y}_i \land \neg h^n_{(1,i)})$ and $(\mathsf{y}_i \lor h^n_{(0,i)})$. 
In fact, we cannot obtain a $y_i=1$ with $TS_{f/P}(x)_i=0$ and $y_i=0$ with $TS_{f/P}(x)_i=1$.
The computation of $TS_{f/P}(y)$ is based on the same amount of Boolean variables and clauses presented before.

Recall that in \eqref{p3}, we search a $y\in TS_{f/P}(x)$ such that $TS_{f/P}(y) \neq TS_{f/P}(x)$. 
To encode this constraint, we add in the matrix the clauses to require that $h^n_{(b,i)}\neq {h'}^n_{(b,i)}$ for at least a pair $(b,i)$ with $b\in\set{0,1}$, $i\in\set{1,\ldots,n}$, and $h'=TS_{f/P}(y)$.
The easiest way to insert this constraint is to use $2n$ variables $\text{diff}_{b,i}$ with $i\in\set{1,\ldots,n}$ and $b\in\set{0,1}$, where diff$_{b,i} \iff (h^n_{(b,i)} \neq h^{n'}_{(b,i)})$ and $\bigvee $diff$_{b,i}$.
At this point, $Q_3=\exists$ and $V_3$ contains all the remaining variables not contained in the first two quantified levels. 
 
All the variables and almost all the clauses of the matrix has been presented. 
In fact, we still need to manage the possibility to fix a maximum amount of components that can be involved in a perturbation. 
A possible approach is to initially consider the problem where no component can be perturbed.
Then, by increasing the number of components $k'$ allowed to be in $P$, we identify possible solutions to the reprogramming problem (with $k'\in\set{1,\ldots,k}$ because the given upper bound $k$ is obeyed). 
This approach is interesting for two reasons. 
First, it is well known that expressing the constraint ``at most $k$ variables'' in CNF is expensive \cite{frisch2010sat}, and second, this technique allows us to find minimal solutions. 
It is reasonable to be interested in the smallest perturbations, since they correspond to those that would require the least amount of work when we want to try to perturb the biological phenomenon, modeled in the BN, in a laboratory. 
In other words, we want to know which components, at least, need to be addressed to achieve the desired behavior.

\subsection*{Experimental evaluation}
We have implemented, in Python, a program capable of constructing the QCNF from the input (the BN is given in bnet\footnote{\url{http://colomoto.org/biolqm/doc/format-bnet.html}} format and the marker in JSON). 
The program generates the different QCNFs (gradually increasing the number of authorized components in P) in QDIMACS\footnote{\url{http://www.qbflib.org/qdimacs.html}} files which are then passed to a solver to find the solutions. 
We studied the performance using different solvers available nowadays. 
In particular, we tried CAQE \cite{rabe2015caqe}, DepQBF \cite{lonsing2017depqbf}, DQBDD \cite{dqbdd} and iDQ \cite{frohlich2014idq}. 
Some of these solvers exploit dependencies between QBF variables.
Testing the QBF approach, as well as the CEGAR one, we always considered $k\leq6$. 
To menage the uncontrollable components, the associated Boolean variables ($clamped_i$ and $value_i$) are simply not defined. 
We compared the time taken by the current approach implemented in BoNesis (based on the complementary problem) and the time taken by the QBF approach. Remark that, in this evaluation the QBF, we just considered locally monotone instances since BoNesis can be used just in this scenario.

Taking the smallest instance of the Moon (\ie, S4) dataset as an example, using the CAQE solver, it takes 1.4 seconds for the first solution and 30 seconds to enumerate them all. 
The QCNF has 802 variables (there are 2 uncontrollable components) and 5508 clauses (after enumerating the different solutions and considering the different $k$ up to 2). 
We already turn out to be, for the smallest example of the dataset, much slower than the most basic Enumeration \& Filtering approach. 
Moreover, the formula grows very fast. 
For S2 with only 4 more components, it goes to have 1322 variables and 17593 clauses.
Then, in 6 minutes, the solver cannot get the first solution. 
We have also tested solvers exploiting dependencies (such as DepQBF). 
They show to be able of improving performance in some cases, but the performance are still below Complementary.
In fact, on the 10 BNs of the dataset, the 3-QBF approach turns out to be significantly slower. 

For this reason, although it was possible to introduce a variation for the synthesis problem or implement the QBF approach for generic BNs, we first tried to improve our approach exploiting the CEGAR technique. We came to new and more efficient approach for this scenario of reprogramming over locally monotone BNs.
Nevertheless, this QBF approach has the potential to deal with non-locally monotone networks.

\clearpage
\section{Results on Trappist dataset}
\label{sec:results-trappist}

\begin{table}
\caption{Percentage of cases where we are able to find a first solution (\ie, solved) and cases where we are able to enumerate all solutions, in the reprogramming case, considering trappist instances (with $n<100$) associated with all possible markers. \label{tab:fam100}}
\centering
\begin{tabular}{|c|c|c|c|c|}
\hline
Instance & k & \# instances & Solved (UNSAT) & Enumeration  \\ \hline
\multirow[c]{4}{*}{inflammatory-bowel (n=47)} & 1 & 4 & 100\%  (0\%) & 100\% \\
 & 2 & 4 & 100\%  (0\%) & 100\% \\
 & 4 & 4 & 100\%  (0\%) & 100\% \\
 & 6 & 4 & 100\%  (0\%) & 100\% \\ \hline
\multirow[c]{4}{*}{T-LGL-survival (n=61)} & 1 & 8 & 100\%  (62\%) & 100\% \\
 & 2 & 8 & 100\%  (50\%) & 100\% \\
 & 4 & 8 & 100\%  (0\%) & 100\% \\
 & 6 & 8 & 100\%  (0\%) & 100\% \\ \hline
\multirow[c]{4}{*}{butanol-production (n=66)} & 1 & 112 & 100\%  (84\%) & 100\% \\
 & 2 & 112 & 100\%  (76\%) & 100\% \\
 & 4 & 112 & 100\%  (27\%) & 100\% \\
 & 6 & 112 & 100\%  (0\%) & 99\% \\ \hline
\multirow[c]{4}{*}{colon-cancer (n=70)} & 1 & 8 & 100\%  (38\%) & 100\% \\
 & 2 & 8 & 100\%  (12\%) & 100\% \\
 & 4 & 8 & 100\%  (12\%) & 62\% \\
 & 6 & 8 & 100\%  (12\%) & 38\% \\ \hline
\multirow[c]{4}{*}{mast-cell-activation (n=73)} & 1 & 122 & 100\%  (83\%) & 100\% \\
 & 2 & 122 & 100\%  (73\%) & 100\% \\
 & 4 & 122 & 100\%  (12\%) & 100\% \\
 & 6 & 122 & 100\%  (5\%) & 96\% \\ \hline
\multirow[c]{4}{*}{IL-6-signalling (n=86)} & 1 & 134 & 100\%  (54\%) & 100\% \\
 & 2 & 134 & 100\%  (33\%) & 100\% \\
 & 4 & 134 & 100\%  (20\%) & 100\% \\
 & 6 & 134 & 100\%  (12\%) & 96\% \\ \hline
\multirow[c]{4}{*}{Corral-ThIL-17-diff  (n=92)} & 1 & 24 & 100\%  (79\%) & 100\% \\
 & 2 & 24 & 100\%  (75\%) & 100\% \\
 & 4 & 24 & 100\%  (62\%) & 100\% \\
 & 6 & 24 & 100\%  (38\%) & 88\% \\ \hline
\multirow[c]{4}{*}{Korkut-2015 (n=99)} & 1 & 122 & 100\%  (85\%) & 100\% \\
 & 2 & 122 & 100\%  (71\%) & 100\% \\
 & 4 & 122 & 100\%  (54\%) & 67\% \\
 & 6 & 122 & 100\%  (41\%) & 51\% \\ \hline
 \end{tabular}
\end{table}

\begin{table}[p]
\caption{Percentage of cases where we are able to find a first solution (\ie, solved) and cases where we are able to enumerate all solutions, in the reprogramming case, considering trappist instances (with $100<n<1000$) associated with all possible markers. \label{tab:fam1000}}
\centering
\resizebox{\textwidth}{!}{
\begin{tabular}{|c|c|c|c|c|c|c|c|c|c|c|}
\hline
Instance & k & \# instances & Solved (UNSAT) & Enumeration & & Instance & k & \# instances & Solved (UNSAT) & Enumeration \\ \hline
\multirow[c]{4}{*}{interferon-1 (n=121)} & 1 & 42 & 100\%  (83\%) & 100\% & & \multirow[c]{4}{*}{RA-apoptosis (n=180)} & 1 & 8 & 100\%  (75\%) & 100\% \\
 & 2 & 42 & 100\%  (79\%) & 100\% & & & 2 & 8 & 100\%  (62\%) & 100\% \\
 & 4 & 42 & 100\%  (48\%) & 98\% & & & 4 & 8 & 100\%  (12\%) & 100\% \\
 & 6 & 42 & 100\%  (26\%) & 95\% & & & 6 & 8 & 100\%  (0\%) & 75\% \\ \hline
 \multirow[c]{4}{*}{adhesion-cip-migration (n=121)} & 1 & 4 & 100\%  (50\%) & 100\% & &  \multirow[c]{4}{*}{MAPK (n=181)} & 1 & 164 & 100\%  (73\%) & 100\% \\
 & 2 & 4 & 100\%  (50\%) & 100\% & & & 2 & 164 & 100\%  (55\%) & 100\% \\
 & 4 & 4 & 100\%  (0\%) & 50\% & &  & 4 & 164 & 100\%  (10\%) & 97\% \\
 & 6 & 4 & 100\%  (0\%) & 50\% & & & 6 & 164 & 100\%  (1\%) & 59\% \\ \hline
 \multirow[c]{4}{*}{TCR-TLR5-signaling (n=130)} & 1 & 124 & 100\%  (79\%) & 100\% & & \multirow[c]{4}{*}{er-stress (n=182)} & 1 & 146 & 100\%  (75\%) & 100\% \\
 & 2 & 124 & 100\%  (51\%) & 100\% & &  & 2 & 146 & 100\%  (67\%) & 100\% \\
 & 4 & 124 & 100\%  (26\%) & 94\% & &  & 4 & 146 & 100\%  (16\%) & 100\% \\
 & 6 & 124 & 100\%  (8\%) & 45\% & &  & 6 & 146 & 100\%  (5\%) & 92\% \\ \hline
 \multirow[c]{4}{*}{influenza-replication (n=131)} & 1 & 8 & 100\%  (62\%) & 100\% & & \multirow[c]{4}{*}{cascade-3 (n=183)} & 1 & 24 & 100\%  (62\%) & 100\% \\
 & 2 & 8 & 100\%  (38\%) & 100\% & & & 2 & 24 & 100\%  (42\%) & 100\% \\
 & 4 & 8 & 100\%  (38\%) & 100\% & &  & 4 & 24 & 100\%  (4\%) & 50\% \\
 & 6 & 8 & 100\%  (38\%) & 75\% & & & 6 & 24 & 100\%  (0\%) & 21\% \\ \hline
 \multirow[c]{4}{*}{prostate-cancer (n=133)} & 1 & 116 & 100\%  (91\%) & 100\% & &  \multirow[c]{4}{*}{CHO-2016 (n=200)} & 1 & 122 & 100\%  (80\%) & 100\% \\
 & 2 & 116 & 100\%  (66\%) & 100\% & &  & 2 & 122 & 100\%  (52\%) & 100\% \\
 & 4 & 116 & 100\%  (3\%) & 63\% & &  & 4 & 122 & 100\%  (2\%) & 75\% \\
 & 6 & 116 & 100\%  (0\%) & 6\% & &  & 6 & 122 & 100\%  (1\%) & 17\% \\ \hline
 \multirow[c]{4}{*}{HIV-1 (n=138)} & 1 & 150 & 100\%  (60\%) & 100\% & &  \multirow[c]{4}{*}{T-cell-check-point (n=218)} & 1 & 150 & 100\%  (63\%) & 100\% \\
 & 2 & 150 & 100\%  (49\%) & 100\% & &  & 2 & 150 & 100\%  (42\%) & 100\% \\
 & 4 & 150 & 100\%  (33\%) & 100\% & &  & 4 & 150 & 100\%  (13\%) & 79\% \\
 & 6 & 150 & 100\%  (29\%) & 96\% & &  & 6 & 150 & 100\%  (11\%) & 46\% \\ \hline
  \multirow[c]{4}{*}{fibroblasts (n=139)} & 1 & 8 & 100\%  (88\%) & 100\% & &  \multirow[c]{4}{*}{ErbB-receptor-signaling (n=247)} & 1 & 128 & 100\%  (82\%) & 100\% \\
 & 2 & 8 & 100\%  (62\%) & 100\% & &  & 2 & 128 & 100\%  (71\%) & 99\% \\
 & 4 & 8 & 100\%  (12\%) & 100\% & &  & 4 & 128 & 100\%  (36\%) & 88\% \\
 & 6 & 8 & 100\%  (0\%) & 12\% & &  & 6 & 128 & 100\%  (17\%) & 41\% \\ \hline
 \multirow[c]{4}{*}{HMOX-1-pathway (n=145)} & 1 & 152 & 100\%  (72\%) & 100\% & &  \multirow[c]{4}{*}{macrophage-activation (n=321)} & 1 & 200 & 100\%  (53\%) & 100\% \\
 & 2 & 152 & 100\%  (61\%) & 100\% & &  & 2 & 200 & 100\%  (38\%) & 100\% \\
 & 4 & 152 & 100\%  (41\%) & 100\% & &  & 4 & 200 & 100\%  (8\%) & 98\% \\
 & 6 & 152 & 100\%  (24\%) & 100\% & &  & 6 & 200 & 100\%  (4\%) & 76\% \\ \hline
 \multirow[c]{4}{*}{kynurenine-pathway (n=150)} & 1 & 184 & 100\%  (75\%) & 100\% & & \multirow[c]{4}{*}{cholocystokinin (n=383)} & 1 & 200 & 100\%  (58\%) & 100\% \\
 & 2 & 184 & 100\%  (70\%) & 100\% & &  & 2 & 200 & 100\%  (48\%) & 100\% \\
 & 4 & 184 & 100\%  (42\%) & 100\% & & & 4 & 200 & 100\%  (8\%) & 92\% \\
 & 6 & 184 & 100\%  (16\%) & 91\% & &  & 6 & 200 & 100\%  (1\%) & 44\% \\ \hline
 \multirow[c]{4}{*}{virus-replication-cycle (n=154)} & 1 & 178 & 100\%  (66\%) & 100\% & &  \multirow[c]{4}{*}{Alzheimer (n=762)} & 1 & 200 & 100\%  (58\%) & 100\% \\
 & 2 & 178 & 100\%  (52\%) & 100\% & &  & 2 & 200 & 100\%  (48\%) & 100\% \\
 & 4 & 178 & 100\%  (37\%) & 100\% & &  & 4 & 200 & 100\%  (12\%) & 94\% \\
 & 6 & 178 & 100\%  (30\%) & 96\% & &  & 6 & 200 & 78\%  (1\%) & 70\% \\ \hline
  \multirow[c]{4}{*}{immune-system (n=164)} & 1 & 138 & 100\%  (80\%) & 100\%  & & & & & & \\
 & 2 & 138 & 100\%  (58\%) & 100\% & & & & & & \\
 & 4 & 138 & 100\%  (7\%) & 82\% & & & & & & \\
 & 6 & 138 & 100\%  (0\%) & 17\% & & & & & & \\ \hline
 \end{tabular}
}
\end{table}

\begin{table}[p]
\caption{Percentage of cases where we are able to find a first solution (\ie, solved) and cases where we are able to enumerate all solutions, in the reprogramming case, considering trappist instances (with $n\geq100$) associated with all possible markers. \label{tab:fambig}}
\centering
\begin{tabular}{|c|c|c|c|c|}
\hline
Instance & k & \# instances & Solved (UNSAT) & Enumeration  \\ \hline
\multirow[c]{4}{*}{KEGG-network (n=1659)} & 1 & 200 & 100\%  (63\%) & 100\% \\
 & 2 & 200 & 100\%  (56\%) & 94\% \\
 & 4 & 200 & 100\%  (37\%) & 78\% \\
 & 6 & 200 & 100\%  (30\%) & 64\% \\ \hline
\multirow[c]{4}{*}{human-network (n=1953)} & 1 & 200 & 100\%  (64\%) & 100\% \\
 & 2 & 200 & 100\%  (54\%) & 95\% \\
 & 4 & 200 & 100\%  (30\%) & 78\% \\
 & 6 & 200 & 100\%  (27\%) & 67\% \\ \hline
\multirow[c]{4}{*}{SN-5 (n=2746)} & 1 & 200 & 99\%  (62\%) & 97\% \\
 & 2 & 200 & 95\%  (50\%) & 86\% \\
 & 4 & 200 & 94\%  (25\%) & 74\% \\
 & 6 & 200 & 94\%  (17\%) & 66\% \\ \hline
\multirow[c]{4}{*}{turei-2016 (n=4691)} & 1 & 200 & 24\%  (18\%) & 21\% \\
 & 2 & 200 & 12\%  (7\%) & 8\% \\
 & 4 & 200 & 6\%  (0\%) & 2\% \\
 & 6 & 200 & 6\%  (0\%) & 0\% \\ \hline
 \end{tabular}
\end{table}

\begin{table}[p]
\caption{Percentage of synthesis problems where we are able to find a first solution (\ie, solved) considering trappist instances with $n<100$ (associated with all possible markers).\label{tab:fam100sy}}
\centering
\begin{tabular}{|c|c|c|c|}
\hline
Instance & $|M|$ & \# instances & Solved (UNSAT)  \\ \hline
inflammatory-bowel (n=47) & 1 & 4 & 100\%  (0\%) \\ \hline
\multirow[c]{2}{*}{T-LGL-survival (n=61)} & 1 & 4 & 100\%  (0\%) \\
 & >1 & 4 & 100\%  (0\%) \\ \hline
\multirow[c]{2}{*}{butanol-production (n=66)} & 1 & 12 & 100\%  (0\%) \\
 & >1 & 100 & 100\%  (1\%) \\ \hline
\multirow[c]{2}{*}{colon-cancer (n=70)} & 1 & 4 & 100\%  (0\%) \\
 & >1 & 4 & 100\%  (0\%) \\ \hline
\multirow[c]{2}{*}{mast-cell-activation (n=73)} & 1 & 22 & 100\%  (27\%) \\
 & >1 & 100 & 99\%  (89\%) \\ \hline
\multirow[c]{2}{*}{IL-6-signalling (n=86)} & 1 & 34 & 100\%  (0\%) \\
 & >1 & 100 & 99\%  (5\%) \\ \hline
\multirow[c]{2}{*}{Corral-ThIL-17-diff  (n=92)}  & 1 & 8 & 88\%  (0\%) \\
 & >1 & 16 & 25\%  (0\%) \\ \hline
\multirow[c]{2}{*}{Korkut-2015 (n=99)} & 1 & 22 & 82\%  (0\%) \\
 & >1 & 100 & 24\%  (0\%) \\ \hline
 \end{tabular}
\end{table}

\begin{table}[p]
\caption{Percentage of synthesis problems where we are able to find a first solution (\ie, solved) considering trappist instances with $100<n<1000$ (associated with all possible markers).\label{tab:fam100sy}}
\centering
\resizebox{0.75\textwidth}{!}{
\begin{tabular}{|c|c|c|c|}
\hline
Instance & $|M|$ & \# instances & Solved (UNSAT)  \\ \hline
\multirow[c]{2}{*}{interferon-1 (n=121)} & 1 & 10 & 70\%  (40\%) \\
 & >1 & 32 & 91\%  (91\%) \\
\hline
\multirow[c]{1}{*}{adhesion-cip-migration (n=121)}  & 1 & 4 & 50\%  (0\%) \\
 \hline
\multirow[c]{2}{*}{TCR-TLR5-signaling (n=130)} & 1 & 24 & 100\%  (0\%) \\
 & >1 & 100 & 100\%  (1\%) \\ \hline
\multirow[c]{2}{*}{influenza-replication (n=131)} & 1 & 4 & 50\%  (0\%) \\
 & >1 & 4 & 0\%  (0\%) \\ \hline
\multirow[c]{2}{*}{prostate-cancer (n=133)} & 1 & 16 & 100\%  (0\%) \\
 & >1 & 100 & 76\%  (0\%) \\ \hline
\multirow[c]{2}{*}{HIV-1 (n=138)}  & 1 & 50 & 100\%  (12\%) \\
 & >1 & 100 & 95\%  (35\%) \\ \hline
 \multirow[c]{2}{*}{fibroblasts (n=139)}  & 1 & 4 & 50\%  (0\%) \\
 & >1 & 4 & 0\%  (0\%) \\ \hline
\multirow[c]{2}{*}{HMOX-1-pathway (n=145)} & 1 & 52 & 98\%  (19\%) \\
 & >1 & 100 & 85\%  (52\%) \\ \hline
\multirow[c]{2}{*}{kynurenine-pathway (n=150)}& 1 & 84 & 80\%  (45\%) \\
 & >1 & 100 & 90\%  (83\%) \\ \hline
\multirow[c]{2}{*}{virus-replication-cycle (n=154)} & 1 & 78 & 18\%  (18\%) \\
 & >1 & 100 & 31\%  (31\%) \\ \hline
 \multirow[c]{2}{*}{immune-system (n=164)} & 1 & 38 & 95\%  (11\%) \\
 & >1 & 100 & 49\%  (29\%) \\ \hline
 \multirow[c]{2}{*}{RA-apoptosis (n=180)} & 1 & 4 & 100\%  (0\%) \\
 & >1 & 4 & 75\%  (0\%) \\
 \hline
 \multirow[c]{2}{*}{MAPK (n=181)} & 1 & 64 & 78\%  (22\%) \\
 & >1 & 100 & 77\%  (52\%) \\ \hline
\multirow[c]{2}{*}{er-stress (n=182)}  & 1 & 46 & 96\%  (35\%) \\
 & >1 & 100 & 91\%  (67\%) \\ \hline
\multirow[c]{2}{*}{cascade-3 (n=183)} & 1 & 8 & 100\%  (0\%) \\
 & >1 & 16 & 88\%  (0\%) \\ \hline
 \multirow[c]{2}{*}{CHO-2016 (n=200)} & 1 & 22 & 86\%  (0\%) \\
 & >1 & 100 & 19\%  (0\%) \\ \hline
  \multirow[c]{2}{*}{T-cell-check-point (n=218)} & 1 & 50 & 100\%  (8\%) \\
 & >1 & 100 & 98\%  (28\%) \\ \hline
   \multirow[c]{2}{*}{ErbB-receptor-signaling (n=247)} & 1 & 28 & 64\%  (0\%) \\
 & >1 & 100 & 36\%  (2\%) \\ \hline
 \multirow[c]{2}{*}{macrophage-activation (n=321)} & 1 & 100 & 97\%  (97\%) \\
 & >1 & 100 & 100\%  (100\%) \\ \hline
\multirow[c]{2}{*}{cholocystokinin (n=383)} & 1 & 100 & 98\%  (28\%) \\
 & >1 & 100 & 91\%  (66\%) \\ \hline
 \end{tabular}
}
\end{table}

\begin{table}[p]
\caption{Percentage of synthesis problems where we are able to find a first solution (\ie, solved) considering trappist instances with $n\geq1000$ (associated with all possible markers).\label{tab:fambigsy}}
\centering
\begin{tabular}{|c|c|c|c|}
\hline
Instance &$|M|$ & \# instances & Solved (UNSAT)   \\ \hline
\multirow[c]{2}{*}{KEGG-network (n=1659)} & 1 & 100 & 42\%  (19\%) \\
 & >1 & 100 & 60\%  (55\%) \\ \hline
\multirow[c]{2}{*}{human-network (n=1953)} & 1 & 100 & 52\%  (38\%) \\
 & >1 & 100 & 72\%  (72\%) \\ \hline
\multirow[c]{2}{*}{SN-5 (n=2746)} & 1 & 100 & 73\%  (56\%) \\
 & >1 & 100 & 78\%  (76\%) \\ \hline
\multirow[c]{2}{*}{turei-2016 (n=4691)} & 1 & 100 & 35\%  (19\%) \\
 & >1 & 100 & 48\%  (42\%) \\ \hline
 \end{tabular}
\end{table}

\end{document}